\title{Efficient Uniform Negative Edge Weights}
\author{Lukas Geis}{Goethe University Frankfurt,
Germany}{lgeis@ae.cs.uni-frankfurt.de}{https://orcid.org/0009-0000-1687-2530}{}
\author{Daniel Allendorf}{Goethe University Frankfurt,
Germany}{dallendorf@ae.cs.uni-frankfurt.de}{https://orcid.org/0000-0002-0549-7576}{}
\author{Thomas Bläsius}{Karlsruhe Institute of Technology (KIT),
Germany}{thomas.blaesius@kit.edu}{https://orcid.org/0000-0003-2450-744X}{}
\author{Alexander Leonhardt}{Goethe University Frankfurt,
Germany}{aleonhardt@ae.cs.uni-frankfurt.de}{https://orcid.org/0009-0006-8263-6900}{}
\author{Ulrich Meyer}{Goethe University Frankfurt,
Germany}{umeyer@ae.cs.uni-frankfurt.de}{https://orcid.org/0000-0002-1197-3153}{}
\author{Manuel Penschuck}{University of Southern
Denmark}{penschuck@imada.sdu.dk}{https://orcid.org/0000-0003-2630-7548}{}
\author{Hung Tran}{Frankfurt Institute for Advanced Studies,
Germany}{htran@ae.cs.uni-frankfurt.de}{}{}
\authorrunning{L. Geis, D. Allendorf, T. Bläsius, A. Leonhardt, U.
Meyer, M. Penschuck, H. Tran}
\keywords{Random Graphs, Shortest Path, Random Edge Weights, Negative Cycles}
\def\weightUp#1#2#3{\ensuremath{#1_{#2 \gets #3}}\xspace}
\def\setCons{\ensuremath{\mathbb S}\xspace}
\def\setConsFull{\ensuremath{\setCons_{G, \wInt}}\xspace}
\def\probCons{\ensuremath{\mathcal S}\xspace}
\def\algdk{\textsc{Dijkstra}\xspace}
\def\algbd{\textsc{BiDijkstra}\xspace}
\def\algbf{\textsc{BellmanFord}\xspace}
\def\algjohn{\textsc{Johnson}\xspace}
\def\wmax{\ensuremath{w_\text{max}}\xspace}
\def\wunif{\ensuremath{w_\text{unif}}\xspace}
\def\wone{\ensuremath{w_\text{one}}\xspace}
\def\wzero{\ensuremath{w_\text{zero}}\xspace}
\def\negbroken{\ensuremath{w'_{\phi}(u,v)}}
\def\broken{\ensuremath{-w'_{\phi}(u,v)}}
\def\ie{i.\,e.,\xspace}
\def\eg{e.\,g.\xspace}
\def\cf{c.\,f.\xspace}
\def\etal{et\,al.\xspace}
\def\gnp{\ensuremath{\mathcal{GNP}}\xspace}
\def\rhg{\ensuremath{\mathcal{RHG}}\xspace}
\def\roads{\ensuremath{\mathcal{ROAD}}\xspace}
\def\Oh{\ensuremath{\mathcal{O}}}
\def\avgdeg{\ensuremath{\overline{d}}\xspace}
\def\nodepath#1#2{\ensuremath{#1{\rightarrow}#2}\xspace}
\def\mcsteps{\ensuremath{\tau}\xspace}
\def\set#1{\ensuremath{\left\{#1\right\}}}
\def\setc#1#2{\set{#1 \mid #2}}
\def\wInt{\ensuremath{\mathcal W}}
\def\rej{\textsc{RejSampler}\xspace}
\def\norej{\textsc{NoRejSampler}\xspace}
\def\mincyc{\textsc{MinCycle}\xspace}
\def\resamp{\textsc{Resamp}\xspace}
\DeclareMathOperator{\Gap}{Gap}
\let\epsilon\varepsilon
\begin{document}

\clearpage

\maketitle

\begin{abstract}
  We consider a maximum entropy edge weight model that allows for
  negative weights.
  Given a graph~$G$ and possible weights $\wInt$ typically consisting
  of positive and negative values, the model selects edge weights ~$w
  \in \wInt^m$ uniformly at random from all weights that do not
  introduce a negative cycle.
  We propose an MCMC process and show that it converges to the
  required distribution.
  We then engineer an implementation of the process using a dynamic
  version of \algjohn's algorithm in connection with a bidirectional
  \algdk search as well as an innovative resampling method.
  We empirically study the performance characteristics of these novel
  sampling algorithms as well as the output produced by the model.
\end{abstract}

\smallskip

\newpage
\section{Introduction}\label{sec:intro}
Shortest path problems are a fundamental algorithmic challenge with
applications in network routing, geographical navigation, and other
optimization tasks.
Classic solutions such as \algdk's
algorithm~\cite{DBLP:journals/nm/Dijkstra59} and most research into
practical routing tools operate under the constraint of non-negative
edge weights.

Yet, real-world scenarios, such as the routing of recuperating
electric vehicles with energy gains and expenditures, naturally
present graphs with negative weights;
other shortest path problems with negative weights arise, for
instance, in the context of flow problems (\eg \cite{10.1145/321694.321699}).
For meaningful shortest paths to exist, graphs should not contain
cycles of negative length, as otherwise, paths of unbounded negative
weight arise.

Analytical and empirical studies of shortest path algorithms
frequently involve random graph models (\eg
\cite{DBLP:journals/dam/FriezeG85,DBLP:journals/jea/BorassiN19}) as
they tend to have a well-understood and mathematically tractable
structure~\cite{Barabasi2016-np,DBLP:books/cu/H2016}.
Typically, the graph topology and edge weights are considered separately.
While considerable efforts were put towards ``realistic'' random
graphs~(\eg \cite{DBLP:journals/csur/DrobyshevskiyT20} for a survey),
maximum entropy network models are frequently used for hypothesis
testing, baselines, and null models~\cite{Milo824, gotelli1996null, Peixto15}.
Applications in biology~\cite{DEMARTINO2018e00596} and reinforcement
learning~\cite{DBLP:journals/make/MoosHASCP22} indicate that in
practice they can improve inference quality in different domains,
either by increasing the models' robustness or by decreasing their bias.

To maximize the entropy of random edge weights, we consider weights
sampled independently and uniformly from some set $\wInt$; typically,
the interval $\wInt = [a,b]$ is a plausible choice.
In the non-negative case ($a \ge 0$), a straightforward sampling
method is to process each edge in isolation.
This does not hold if a significant probability mass goes towards
negative weights ($a < 0$ and $|a| / (b-a) \gg 0$), since cycles of
negative weights are likely to appear in many graphs.
For example, consider a doubly-linked path, \ie a graph of nodes
$v_1, \ldots, v_{k+1}$ with edges $(v_i, v_{i+1})$ and $(v_{i+1},
v_i)$ for all $1 \le i \le k$.
Assuming random i.i.d.\ edge weights with $\wInt = \set{-1, 0, 1}$,
each pair is randomly assigned one out of $|\wInt^2| = 9$
possibilities; exactly six of these choices are legal as they have
non-negative sums.
Hence, the probability of no negative cycles vanishes with an upper
bound of $(6/9)^{k}$.

\subsection{Our contribution}
In \cref{sec:model}, we extend the maximum entropy model for signed
edge weights without introducing negative cycles.
We then propose a basic Markov chain process to sample random
negative edge weights in \cref{sec:sampler} and show that the process
converges to a uniform distribution.
In \cref{sec:engineering}, we discuss multiple algorithmic
optimizations to accelerate the sampling process.
These involve a simple-to-implement dynamic variant of \algjohn's
algorithm including an application of a bidirectional SSSP search.
We then also show how to further bias the process to force a faster
convergence using a novel resampling method without losing uniformity.
Lastly, in \cref{sec:experiments}, we empirically study the
performance and convergence of the MCMC process as well as the
properties of the output produced.
This yields a highly optimized sampling algorithm, which we provide
ready-to-use on~\url{https://codeberg.org/lukasgeis/unew}.

\subsection{Preliminaries and notation}
Let $G = (V, E)$ be a \emph{directed graph} with $E \subseteq V \times V$.
If unambiguous from context, we use $n$ and $m$ to denote the number
of nodes~$|V|$ and edges~$|E|$, respectively.
Given a graph~$G = (V,E)$, we call $w\colon E \rightarrow \mathbb{R}$
an \emph{edge weight} function (short weights);
by fixing some implicit order on $E$, we represent the function as a
vector $w \in \mathbb{R}^m$.
We denote new weights, that differ from~$w$ only in one position
corresponding to a single edge~$e$, as $\weightUp{w}{e}{c}$.

We call $P = (v_1, \ldots v_k) \in V^k$ a $k$-path, if all edges
$(v_i, v_{i+1})$ exist and ``overload'' the weight function for paths
as $w(P) = \sum_i w((v_i, v_{i+1}))$.
A path~$P$ is \emph{negative} if $w(P) < 0$.
A $(k+1)$-path is a \emph{$k$-cycle} if $v_{k+1}  = v_{1}$.

For a graph~$G$ with weights~$w$, we say \emph{$(G, w)$ is
consistent} if no negative cycles exist.
Two vertices $u, v \in V$ are \emph{strongly connected} if there
exists a directed \nodepath{u}{v} path and vice versa.
Being strongly connected is an equivalence relation on $V$ and the
equivalence classes are called \emph{strongly connected components} (SCC).
A graph is \emph{strongly connected} if it has only one strongly
connected component.

\subsection{Related work}\label{subsec:related_work}
Sampling negative edge weights was studied before (\eg
\cite{DBLP:journals/mp/CherkasskyG99,DBLP:conf/wae/DemetrescuFMN00}),
typically as a secondary consideration to synthesize benchmark data sets.
Yet, we are not aware of any existing maximum entropy models for
negative edge weights in shortest path networks.
However, given the conceptual simplicity of our sampling algorithm,
similarities with existing work are expected.
Notably, \cite{DBLP:conf/wae/DemetrescuFMN00} empirically study
shortest path under edge weight updates and, among others, use their
\texttt{gen\_seq} algorithm (described in five lines of prose) to
produce update sequences.
We expect that such sequences behave similarly to the transitions in
our Markov chain proposed in~\cref{sec:sampler}.
However, given the vastly different scope of their work, the authors
did not analyse the process nor the properties of the output.

As we are unaware of previous work on maximum entropy models for
negative edge weights, we mainly review the practical need for such
networks and shortest path algorithm. We do, however, note that there
is extensive research into Monte-Carlo-Markov-Chains (MCMC) for
approximate uniform samplers of different graph problems (\eg
\cite{rao1996markov,DBLP:conf/alenex/GkantsidisMMZ03,strona2014curveball,CarstensPhd}).

A clear strength of such MCMC solutions are their simplicity and
relative ease with which additional constraints can be added (\eg
\cite{DBLP:conf/alenex/StantonP11,DBLP:journals/compnet/VigerL16,DBLP:conf/dexa/ArafatBDB20}).
Notably, \cite{DBLP:journals/compnet/VigerL16} augment Edge Switching
with graph connectivity tests.
This bears some resemblance to our proposal as both processes
frequently execute reachability/shortest path computations.
However, aside from addressing different problems, the methods also
differ in technical details.

\subparagraph*{Networks with negative edge weights}
A prominent application are road networks for electrical cars where
downward-directed roads allow the vehicles to recuperate their batteries.
A recent survey summarized a plethora of applications and research
directed towards negative edge weights in social networks~\cite{KAUR201621}.
Here, positive edge weights might correspond to friendship or
affection while negative edge weights relate to opposition and aversion.
This yields new measures of power in politically charged
networks~\cite{DBLP:journals/socnet/SmithHKLBB14} and centrality
measures respecting negative ties~\cite{DBLP:journals/socnet/EverettB14}.

Several ``nature inspired'' techniques produce shortest path networks
with negative weights.
For instance, a common technique derives edge weights from the
gradient of random node potentials.
This, however, needs further refinements (\eg by adding positive
random noise) since otherwise, each cycle has weight exactly~$0$
which seems implausible for physical networks.

\subparagraph*{Shortest path algorithms}
The SSSP algorithms \algdk~\cite{DBLP:journals/nm/Dijkstra59}, and
\algbf~\cite{Bel58,Ford56} are classical, but still relevant solutions.
With a runtime of $\Oh(m+n\log(n))$, \algdk is asymptotically faster
than the $\mathcal{O}(nm)$ runtime of \algbf.
Still, \algdk has the significant drawback of only supporting
non-negative edge weights.
\algjohn's all-pair shortest path algorithm implements an ingenious
workaround~\cite{DBLP:journals/jacm/Johnson77}.
It translates general edge weights solely to non-negative weights,
such that a shortest path in one network translates to a shortest
path in the other (see \cref{lem:johnson} for details).

In addition, the recent theoretical breakthrough result of
Bernstein~\etal~\cite{DBLP:conf/focs/BernsteinNW22} presents an
algorithm that solves the SSSP problem on networks with negative edge
weights but without negative cycles in near-linear time.
This recent addition stimulated subsequent (partial) implementations
thereof~\cite{DBLP:conf/wea/CassisKNR25} which had to resort to
benchmarking their algorithmic engineering efforts on (biased) weight
distributions in lack of a maximum-entropy generator.

\section{The Negative Edge Weight Model}\label{sec:model}
We want to uniformly sample from the set of all edge weights
consistent with~$G$:
\begin{definition}
  Given graph~$G$ and weights~$\wInt$, define the set of consistent
  edge weights as
  $$\setConsFull = \setc{w}{w\colon E\mapsto \wInt\ \land\ (G,
  w)\text{ is consistent}}.$$
  Then, $\probCons_{G, \wInt}$ is the uniform distribution over
  $\setCons_{G, \wInt}$.
  If unambiguous, we shorten to $\setCons$ and $\probCons$.
\end{definition}

\subparagraph*{Assumptions}
For ease of exposition, we assume without loss of generality that
(A1) $G$ is strongly connected and (A2) $\wInt$ contains positive and
negative values.
If (A1) does not hold, we may decompose~$G$ into strongly connected
components and a residual acyclic graph in linear
time~\cite{DBLP:journals/siamcomp/Tarjan72}, process the former
independently, and trivially sample weights for the latter.
As for (A2), observe that sampling from $\probCons_{G,\wInt}$ is
trivial if $G$ is acyclic or if $\wInt$ does only contain positive
values, as neither case can have negative cycles.
Further, if $\wInt$ consists only of negative values, consistent
weights exist iff $G$ is acyclic.

\subparagraph*{Variants}
It is natural to ask whether ---for some number~$k \in \mathbb N$---
there exists a $w \in \setCons$ that has $k$ negative edges.
Unfortunately, this problem is NP-hard.
This can be seen by a simple reduction from the NP-hard problem
\emph{Directed Feedback Arc Set}~\cite{DBLP:conf/coco/Karp72}, which
asks whether one can remove $k$ edges from a directed graph to obtain
an acyclic graph.
This is the case iff there exists a consistent weight function $w$
assigning negative weights to $m - k$ edges.
By definition, the graph induced by the negative edges is acyclic;
removing the $k$ positive edges thus suffices to obtain an acyclic graph.
Conversely, assume that removing $k$ edges results in an acyclic graph.
Then assigning large positive weights to these $k$ edges and small
negative weights to the remaining edges results in a consistent weight function.

\section{The MCMC sampler}\label{sec:sampler}
\begin{algorithm2e}[tb]
  \caption{\rej: Approximate uniform sampler}
  \label{algo:sampler}

  \small

  \KwInput{graph $G = (V,E)$, weight interval~\wInt, number of steps \mcsteps}
  \KwOutput{consistent weights $w\colon E \rightarrow \wInt$}

  \BlankLine
  $w \leftarrow $ arbitrary consistent weight function\;
  \For{$t \gets 1 \text{ to } \mcsteps$}{
    $e \leftarrow$ uniform random edge in $E$\;
    $c \leftarrow$ uniform random weight in $\wInt$\;
    \If{$(G,\weightUp{w}{e}{c})$ is consistent \label{line:sampler_consistent}}{
      accept $w \leftarrow \weightUp{w}{e}{c}$ \tcp*{update weight of~$e$}
    }
  }
  \KwRet{$w$}\;
\end{algorithm2e}

Unfortunately, a direct sampling method of $\probCons$ is highly
non-trivial as each cycle imposes a constraint on the weight function.
Since there are possibly an exponential number of cycles in a graph
and only an exponentially small fraction of all weights in $\wInt^m$
appear in $\setCons$, we instead opt for an approximate sampler using
a Markov-Chain Monte-Carlo (MCMC) process.
We introduce \cref{algo:sampler} to sample approximately uniformly
from $\setCons_{G, \wInt}$ for a strongly connected graph~$G$ and
partially negative values~$\wInt$ (wlog; see \cref{sec:model}).

We start with arbitrary weights~$w \in \setCons_{G, \wInt}$;
canonical choices include setting all edge weights to
$0$ (if $0 \in \wInt$),
$1$ (if $1 \in \wInt$),
the largest value in $\wInt$,
or a randomly chosen non-negative value in $\wInt$.
Subsequently, we perturb the solution by repeating the following step
$\mcsteps$ times:
we select a uniform random edge $e \in E$ and new weight $c \in \wInt$.
Then, if $(G, \weightUp{w}{e}{c})$ remains consistent, we update the
edge and reject otherwise.

For simplicity, we assume in the following that the weight interval
$\wInt$ is discrete and finite, thus rendering the state space finite.
The spirit of all arguments and results, however, carries over to
continuous~$\wInt$.
The state space of the MCMC in \cref{algo:sampler} consists of
exactly all consistent weight functions in~$\setCons$.
The neighborhood~$N[w]$ of state $w$ consists of all weight functions
that differ in at most one position:

$$
N[w] = \setc{w'}{\exists e \in E\ \exists c \in \wInt\colon w' =
\weightUp{w}{e}{c} \in \setCons}
$$

\begin{observation}[Symmetry]\label{obs:symmetric-edges}
  Weights $w_1, w_2 \in \setCons$ satisfy $w_1 \in N[w_2]
  \Leftrightarrow w_2 \in N[w_1]$.
\end{observation}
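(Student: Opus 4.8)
The plan is to unfold the definition of $N[\cdot]$ and observe that the only nontrivial content is the symmetry of the ``agree everywhere but on one edge'' relation, combined with the fact that both $w_1$ and $w_2$ are assumed consistent. Concretely, I would prove one direction explicitly and invoke symmetry for the other.

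Suppose $w_1 \in N[w_2]$. By definition there are an edge $e \in E$ and a value $c \in \wInt$ with $w_1 = \weightUp{w_2}{e}{c}$ and $w_1 \in \setCons$; in particular $w_1$ and $w_2$ coincide on every edge except possibly $e$. To exhibit $w_2 \in N[w_1]$ I would re-use the same edge $e$ and set $c' := w_2(e)$. Since $w_2$ maps $E$ into $\wInt$, we have $c' \in \wInt$, and by construction $\weightUp{w_1}{e}{c'} = w_2$, because the two functions already agree off $e$ and now agree on $e$ as well. Finally, $w_2 \in \setCons$ holds by the hypothesis of the observation, so $w_2$ satisfies the membership condition of $N[w_1]$, giving $w_2 \in N[w_1]$.

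The converse implication $w_2 \in N[w_1] \Rightarrow w_1 \in N[w_2]$ is the same argument with the roles of $w_1$ and $w_2$ exchanged, which is legitimate because the statement and all hypotheses are symmetric in the two weight functions.

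I do not expect any genuine obstacle here. The only points that warrant a word of care are that the witness value $c' = w_2(e)$ lies in $\wInt$ --- immediate since $w_2$ is a weight function into $\wInt$ --- and that consistency of the target state is supplied directly by the assumption $w_1, w_2 \in \setCons$ rather than having to be re-derived. I would also remark that this symmetry is precisely what guarantees a symmetric proposal mechanism, and hence the reversibility needed for the acceptance rule of \cref{algo:sampler} in the subsequent convergence argument.
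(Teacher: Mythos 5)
Your proposal is correct and matches the paper's intent exactly: the paper states this as an observation holding ``by construction,'' and the implicit argument is precisely your unfolding of $N[\cdot]$ with the reverse witness $c' = w_2(e) \in \wInt$ and consistency of the target supplied by the hypothesis $w_1, w_2 \in \setCons$. Nothing is missing, and your care about $c' \in \wInt$ is the right (if routine) point to check.
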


\begin{observation}\label{obs:may-increase}
  Let $w_1 \in \setCons$ and $w_2 \in \wInt^m$, s.t.
  $\forall e \in E\colon \ w_1(e) \le w_2(e)$.
  Then, $w_2 \in \setCons$.
  We can reach $w_2$ from $w_1$ by increasing the weight in all
  differing positions in any order.
\end{observation}

\begin{theorem}\label{thm:convergence-mc}
  Let $G = (V, E)$ be a directed graph and $\wInt$ a weight interval.
  Then, the MCMC process converges to a uniform distribution on
  $\setCons_{G, \wInt}$.\label{thm:uniform}
\end{theorem}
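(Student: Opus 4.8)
The plan is to treat \cref{algo:sampler} as a finite, time-homogeneous Markov chain on the state space $\setCons$ and to invoke the standard convergence theorem for ergodic chains: a finite chain that is irreducible and aperiodic converges to its unique stationary distribution. It then suffices to (i) read off the transition probabilities, (ii) verify that the uniform distribution is stationary, (iii) prove irreducibility, and (iv) prove aperiodicity. Throughout I assume $\setCons \neq \emptyset$, which is implicit in the algorithm's requirement of an arbitrary consistent starting weight.

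First I would write down the transition matrix. From a state $w$, one step picks a pair $(e,c) \in E \times \wInt$ uniformly, \ie with probability $\frac{1}{m|\wInt|}$ each, and moves to $\weightUp{w}{e}{c}$ iff this lies in $\setCons$, staying at $w$ otherwise. For distinct states $w_1 \neq w_2$ a transition is possible only when they differ in exactly one coordinate, in which case both the edge $e$ and the value $c = w_2(e)$ are uniquely determined; hence $\prob{w_1 \to w_2} = \frac{1}{m|\wInt|}$ whenever $w_2 \in N[w_1]$ and $0$ otherwise. By \cref{obs:symmetric-edges} the relation $w_2 \in N[w_1]$ is symmetric, so $\prob{w_1 \to w_2} = \prob{w_2 \to w_1}$ for all $w_1 \neq w_2$. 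A symmetric transition matrix satisfies detailed balance with respect to the uniform distribution, which is therefore stationary, so step (ii) is essentially free.

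The core of the argument, and the step I expect to require the most care, is irreducibility, which I would establish by routing every state through a common hub. Let $w^\star$ assign the maximal value $\max\wInt$ to every edge. Whenever $\setCons \neq \emptyset$ one checks $w^\star \in \setCons$: if $G$ is acyclic there are no cycles to violate, and otherwise the mere existence of a consistent weighting forces $\max\wInt \ge 0$, so every cycle under $w^\star$ has non-negative weight. For arbitrary $w \in \setCons$ we have $w(e) \le w^\star(e)$ for all $e$, so \cref{obs:may-increase} supplies a sequence of single-edge weight increases from $w$ to $w^\star$ through consistent intermediate states; each increase is a legal, positive-probability transition. By the symmetry from the previous paragraph, the reversed sequence (single-edge decreases through the same consistent intermediates) is also a positive-probability path, so $w^\star$ both reaches and is reachable from every state. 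Chaining two such paths through $w^\star$ connects any $w_1, w_2 \in \setCons$, giving irreducibility.

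Finally, aperiodicity is immediate: proposing the current value $c = w(e)$ leaves the chain at $w$, so $\prob{w \to w} \ge \frac{1}{|\wInt|} > 0$ for every state, yielding period $1$. Combining stationarity of the uniform distribution, irreducibility, and aperiodicity, the ergodic theorem delivers convergence to the uniform distribution on $\setCons$, as claimed. In summary, symmetry (hence stationarity) and aperiodicity follow directly from the two preceding observations and the self-loop probability; the one place demanding genuine attention is confirming that the hub state $w^\star$ is itself consistent, which is what makes the reachability argument go through.
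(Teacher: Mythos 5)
Your proposal is correct and takes essentially the same route as the paper's proof: irreducibility via single-edge increases to the all-maximal hub \wmax using \cref{obs:may-increase} with reversal by \cref{obs:symmetric-edges}, aperiodicity from the self-loop obtained by re-proposing the current weight, and symmetry of the transition probabilities yielding the uniform stationary distribution (the paper cites a packaged convergence theorem for irreducible, aperiodic, symmetric chains, whereas you phrase the same content as detailed balance plus the standard ergodic theorem). Like the paper's main-text argument, yours covers the finite-$\wInt$ case, with the continuous case deferred to \cref{sec:appendix-general-convergence}.
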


\begin{proof}
  We use the fact that any irreducible,  aperiodic, and symmetric
  Markov chain converges to a uniform distribution
  \cite[Th.~7.10]{DBLP:books/daglib/0012859}.
  We only show these properties:

  The MC is \underline{irreducible} iff its state graph is strongly connected.
  Let $w_1, w_2 \in \setCons$ be states and $\wmax$ the function that
  assigns every edge the maximum weight $\max(\wInt)$.
  By \cref{obs:may-increase}, we know that there exist paths $P_i$
  from $w_i$ to \wmax.
  Then, by \cref{obs:symmetric-edges}, the opposite paths $P^r_i$
  from \wmax to $w_i$ also exist.
  We can therefore reach $w_2$ from $w_1$ by $P_1$ followed by
  $P^r_2$ (and vice versa). \hfill $\bigtriangleup$

  An irreducible MC is \underline{aperiodic} if there exists a self-loop.
  In fact, each state~$w$ has a loop:
  If we sample edge $e$ and the old weight $c = w(e)$, no transition
  occurs. \hfill $\bigtriangleup$

  The MC is \underline{symmetric} iff the probability of transition
  between any two adjacent states is identical in both directions.
  Let $w_1, w_2 \in \setCons,w_1\in N[w_2]$.
  If $w_1 = w_2$, the property holds trivially.
  For $w_1 \ne w_2$, there is a unique $e \in E$ with $w_1(e) \ne w_2(e)$.
  To transition between $w_1$ and $w_2$, we need to draw $e$ with
  probability $1/m$ and $w_1(e)$ or $w_2(e)$ with $1/|\wInt|$.
\end{proof}

We show in the appendix that (i) \cref{thm:convergence-mc} also holds
true for uncountable $\wInt$ and (ii) the process is rapidly mixing
on the simple $n$-cycle graph.

\section{Algorithmic insights}\label{sec:engineering}
Most of \cref{algo:sampler} can be efficiently implemented using
standard techniques.
We maintain the graph in the \emph{compressed sparse row} (CSR)
format~\cite{tinney1967direct}, allowing fast neighborhood iteration
as well as constant time uniform sampling of an edge.

The most expensive task of \cref{algo:sampler} is to test whether the
updated weight function~$w'$ remains consistent with $G$.
To this end, we need to verify that $w'$ induces no negative cycle.
As discussed in \cref{subsec:related_work}, this problem has already
been considered for several applications.
However, we are unaware of solutions that utilize bidirectional
searches, which often perform significantly better as shown in
\cref{sec:experiments}.
In the following, we begin with a straightforward unidirectional
implementation designed for the MCMC process and highlight
connections to previous work.
Subsequently, we extend the algorithm to incorporate a bidirectional search.
Finally, we alter the algorithm to always sample an acceptable weight
uniformly for an edge.

Consider a weight update $w' = \weightUp{w}{e}{c}$ on edge~$e$.
In the following, we focus on decreased weights (\ie $w'(e) < w(e)$),
since increased weights cannot introduce negative cycles by
\cref{obs:may-increase}.
\algbf~\cite{Ford56,Bel58} can detect negative cycles in~$w'$ directly.
However, even an optimized implementation performs subpar.
The primary issue is \algbf's lack of locality --- the algorithm
typically performs several global rounds despite the fact that
negative cycles tend to be short.

\medskip

Thus, we want to switch to \algdk and later its bidirectional
variant; neither of which supports negative weights.
As a first step, we show how to detect negative cycles in $w'$ using
only the previous weights~$w$.
The main insight is inductive in nature: as the current $(G, w)$ is
consistent (otherwise they would have been rejected), we know that
the candidate $(G, w')$ is consistent iff the updated edge~$e$ is not
part of a negative cycle.
This can be tested using:

\begin{observation}\label{obs:new_weight_in_cycle}
  Let $P$ be a shortest \nodepath{v}{u} path in $(G, w)$.
  Obtain $w' = \weightUp{w}{e}{c}$ by updating edge~$e=(u,v)$.
  Then, $e$ is in a negative cycle for $w'$ iff $w'(u,v) + w(P) < 0$.
\end{observation}

\subsection{An online version of Johnson's algorithm}
If none but the new weight~$w'(e)$ was negative, we could use \algdk
based on \cref{obs:new_weight_in_cycle} with a running time of
$\mathcal{O}(m + n\log n)$.
Even better, we can heavily prune the search space and ignore all
paths that have a weight of at least $-w'(e)$:
Since all unprocessed edges have a non-negative weight, they cannot
lead to strictly lighter paths than the ones already discovered (\ie
the smallest element in \algdk's priority queue).

However, since $w$ may contain negative weights, we recast the
shortest path problem into an equivalent non-negative analog using
the potential
method~\cite{DBLP:journals/jacm/Johnson77,DBLP:conf/focs/BernsteinNW22}.
The following lemma provides the core ingredient for this to work and
also forms the basis for our algorithm.  The function $\phi$ in the
lemma statement is also called \emph{potential function}.
\begin{lemma}[based on~\cite{DBLP:journals/jacm/Johnson77}]\label{lem:johnson}
  Let $G = (V, E)$ be a graph with weights $w\colon E \to \mathbb R$
  and let $\phi\colon V \to \mathbb R$.
  Further, let $w_\phi(u, v) = w(u, v) + \phi(v) - \phi(u)$.
  Then, any path in $G$ is a shortest path with respect to $w$ if and
  only if it is a shortest path with respect to $w_\phi$.
\end{lemma}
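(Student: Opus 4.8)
The plan is to reduce the statement to a single telescoping identity relating the two weight functions along a path. The crucial point is that $w_\phi$ and $w$ assign the \emph{same} length to a path up to an additive term that depends only on its two endpoints. Fix a source $s$ and a target $t$; here ``shortest path'' is understood as a minimum-weight path among all $s$-$t$ paths (paths sharing these endpoints), since only then is the notion well defined.

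First I would take an arbitrary $k$-path $P = (v_1, \ldots, v_k)$ with $v_1 = s$ and $v_k = t$, and expand its reweighted length using the definition of $w_\phi$:
\begin{equation*}
    w_\phi(P) = \sum_{i=1}^{k-1} \bigl( w(v_i, v_{i+1}) + \phi(v_{i+1}) - \phi(v_i) \bigr) = w(P) + \sum_{i=1}^{k-1}\bigl(\phi(v_{i+1}) - \phi(v_i)\bigr).
\end{equation*}
The potential terms form a telescoping sum that collapses to $\phi(v_k) - \phi(v_1) = \phi(t) - \phi(s)$, giving the identity $w_\phi(P) = w(P) + \phi(t) - \phi(s)$.

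Since the offset $\phi(t) - \phi(s)$ is identical for every $s$-$t$ path, I would then compare any two such paths $P$ and $Q$: their reweighted lengths satisfy $w_\phi(P) - w_\phi(Q) = w(P) - w(Q)$, so $w_\phi(P) \le w_\phi(Q)$ exactly when $w(P) \le w(Q)$. Consequently $P$ minimizes $w_\phi$ among $s$-$t$ paths if and only if it minimizes $w$, which is precisely the claimed equivalence of shortest paths; the source and target of $P$ were arbitrary, so the statement holds for every pair of endpoints.

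I expect no genuine obstacle here --- the argument is essentially a one-line telescoping computation --- and the only care needed is in the formalization: ensuring ``shortest path'' is read as minimality over paths with fixed endpoints (the additive offset is meaningless otherwise), and noting that the equivalence is \emph{per endpoint pair} rather than a statement about a global length scale.
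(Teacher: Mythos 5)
Your proof is correct and follows essentially the same route as the paper's: both expand $w_\phi(P)$ as a telescoping sum, obtain $w_\phi(P) = w(P) + \phi(v_k) - \phi(v_1)$, and conclude that since the offset depends only on the endpoints, minimality over paths with fixed endpoints is preserved in both directions. Your explicit remark that ``shortest path'' must be read per endpoint pair is a sensible clarification that the paper leaves implicit.
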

\begin{proof}
  Let $P = (v_1,\ldots,v_p)$ be a path in $G$.
  The path weight~$w_{\phi}(P)$ follows as a telescope sum that
  cancels out all contributions but the first and the last potential:
  \begin{align}
    w_{\phi}(P) = & \sum\nolimits_{i=1}^{p-1} \left[
      \textcolor{red}{w(v_i,v_{i+1})} - \textcolor{blue}{\phi(v_i)} +
    \textcolor{green!50!black}{\phi(v_{i+1})} \right] \nonumber \\
    & = \textcolor{red}{\sum_{i=1}^{p-1} w(v_i,v_{i+1})}
    - \textcolor{blue}{\sum_{i=1}^{p-1} \phi(v_i)}
    + \textcolor{green!50!black}{\sum_{i=2}^{p} \phi(v_i)}
    \label{eq:telescope_sum}
    = \textcolor{red}{w(P)} - \textcolor{blue}{\phi(v_1)} +
    \textcolor{green!50!black}{\phi(v_p)}
  \end{align}
  The remaining potentials $\phi(v_1)$ and $\phi(v_p)$ appear on all
  \nodepath{v_1}{v_p} paths.
  Thus, a shortest path for $w_\phi$ remains a shortest path in terms
  of $w$ (and vice versa).\hfill
\end{proof}

\noindent
We call a potential function $\phi$ \emph{feasible} if $w_\phi(u, v)
\ge 0\  \forall (u, v) \in E$.
Further, we call $w_{\phi}(u,v)$ the \emph{potential weight} of $(u,v)$.
We say edge $(u,v)$ is \emph{broken} if it has a negative
\emph{potential weight}.

\subsection{Maintaining a feasible potential}
\begin{figure}[t]
  {
\def\dx{4em}
\def\dy{2.2em}

\def\edge#1#2#3{
    \edef\pota{\ifcsdef{pot#1}{\csuse{pot#1}}{\csuse{orgpot#1}}}
    \edef\potb{\ifcsdef{pot#2}{\csuse{pot#2}}{\csuse{orgpot#2}}}
    \numdef\wphi{\csuse{w#1#2} + \potb - \pota}
    \edef\broken{\ifnumcomp{\wphi}{<}{0}{broken}{edge}}
    \path[edge, \broken, #3] (v#1) to node[align=center, sloped] {$\csuse{w#1#2}$ \\ $\wphi$} (v#2);
}

\tikzset{
    vertex/.style = {fill=white, draw, inner sep=0, minimum width=1.5em, minimum height=1.5em, circle},
    normal/.style={vertex},
    high/.style={vertex, fill=white!90!red, draw=red},
    org vertex/.style={vertex, green!50!black, fill=white, opacity=0.5},
    edge/.style={draw, ->},
    broken/.style={edge, thick, red}
}

\def\commonElements#1#2{
\foreach \y in {0, ..., #1} {
        \draw[dotted, opacity=0.5] (-0.5 * \dx, \y * \dy) to node[left, pos=0] {\y} ++(4.75 * \dx, 0);
        \draw (-0.5 * \dx, \y * \dy) to node[left, pos=0] {\y} ++(0.125 * \dx, 0);
    }

\draw[->] (-0.375*\dx, -0.125*\dy) to node[left, pos=1] {$\phi$} ++(0, 0.5*\dy + #1 * \dy);

\foreach \name/\prop [count=\x] in {z/vertex, u/high, v/high, x/vertex, y/vertex} {
\edef\pot{\ifcsdef{pot\name}{\csuse{pot\name}}{\csuse{orgpot\name}}}
\ifnumcomp{\pot}{=}{\csuse{orgpot\name}}{
\edef\opa{\ifcsdef{pot\name}{0.4}{1}}
\expandafter\node\expandafter[\prop, opacity=\opa, label=above:{\ifthenelse{\equal{#2}{nolabel}}{}{\tiny $\phi(\name){=}\pot$}}]
(v\name) at (\dx * \x - \dx, \dy * \pot) {$\name$};
}{
\expandafter\node\expandafter[\prop, label=above:{\ifthenelse{\equal{#2}{nolabel}}{}{\tiny $\phi(\name){=}\pot$}}]
(v\name) at (\dx * \x - \dx, \dy * \pot) {$\name$};
\node[org vertex] (org-v\name) at (\dx * \x - \dx, \dy * \csuse{orgpot\name}) {$\name$};
\draw[thick, ->, green!50!black, opacity=0.5] (org-v\name) to (v\name);
}
}

\edge{u}{v}{};
\edge{v}{x}{};
\edge{x}{y}{};
\edge{z}{u}{};
}

\def\orgpotz{1}
\def\orgpotu{1}
\def\orgpotv{1}
\def\orgpotx{0}
\def\orgpoty{0}

\begin{center}
    \scalebox{0.8}{
        \begin{tikzpicture}[
                header label/.style={anchor=east},
                weight plot/.style={xshift=15em}
            ]
            \node[header label] (inp) at (0,0) {
                \begin{minipage}{16em}\textbf{Input:} \\
                    Since nodes $u$ and $v$ have same potential,
                    the edge $(u,v)$ is broken: $w_\phi(u,v) = w(u,v) = -2$.
                \end{minipage}
            };
            \node[weight plot] at (inp.east) {
                \begin{tikzpicture}
                    \def\wuv{-2}
                    \def\wvx{ 1}
                    \def\wxy{ 1}
                    \def\wyz{-1}
                    \def\wzu{ 1}

                    \commonElements{2}{}
                    \edge{y}{z}{bend left, out=310, in=230};
                \end{tikzpicture}};

            \node[header label] (sol1) at (0, -3) {
                \begin{minipage}{16em}\textbf{Solution 1 (using \algdk):} \\
                    increase only $\phi(v) + 2$.
                    This large update causes ``many'' cascading updates
                \end{minipage}};
            \node[weight plot] at (sol1.east) {\begin{tikzpicture}
                    \def\potz{2}
                    \def\potu{1}
                    \def\potv{3}
                    \def\potx{2}
                    \def\poty{1}

                    \def\wuv{-2}
                    \def\wvx{ 1}
                    \def\wxy{ 1}
                    \def\wyz{-1}
                    \def\wzu{ 1}

                    \commonElements{3}{nolabel}
                    \edge{y}{z}{bend right, out=40, in=130};
                \end{tikzpicture}};

            \node[header label] (sol2) at (0, -6.5) {\begin{minipage}{16em}
                    \textbf{Solution 2 (using \algbd):}\\
                    split update to $\phi(u) - 1$ and $\phi(v) + 1$.
                    This update causes fewer cascading updates as $z$ and $y$ remain unchanged.
                \end{minipage}
            };
            \node[weight plot] at (sol2.east) {\scalebox{0.95}{\begin{tikzpicture}
                        \def\potz{1}
                        \def\potu{0}
                        \def\potv{2}
                        \def\potx{1}
                        \def\poty{0}

                        \def\wuv{-2}
                        \def\wvx{ 1}
                        \def\wxy{ 1}
                        \def\wyz{-1}
                        \def\wzu{ 1}

                        \commonElements{2}{nolabel}
                        \edge{y}{z}{bend right, out=25, in=130, opacity=0.4};

                        \draw [decorate, decoration = {calligraphic brace}, thick] (0.6*\dx, 2) to node[above] {backward} ++(0.8*\dx, 0);
                        \draw [decorate, decoration = {calligraphic brace}, thick] (1.6*\dx, 2) to node[above] {forward search} ++(1.8*\dx, 0);
                    \end{tikzpicture}}};
        \end{tikzpicture}
    }
\end{center}
}
  \vspace{-1.5em}
  \caption{
    Fixing potential on a cycle graph.
    The number above edge~$(a,b)$ is $w(a,b)$ and below
    $w_{\phi}(a,b) = w(a,b) + \phi(b) - \phi(a)$;
    green nodes indicate the input potential (top panel).
    Edge $(u,v)$ is broken in the input. It needs an increase of
    $\phi(v) - \phi(u)$ by at least $-w_\phi(u,v) = 2$ to make $\phi$ feasible.
  }
  \label{fig:potentials}
\end{figure}

A feasible potential function~$\phi$ can be computed by adding a
node~$r$ with outgoing edges of weight~$0$  to all existing nodes.
Then, SSSP starting in~$r$ yields a feasible $\phi$ by setting
$\phi(u)$ to minus the length of the shortest $\nodepath r u$ path
found~\cite{DBLP:journals/jacm/Johnson77}.
We avoid this step by starting the MCMC process with non-negative edge weights.
Then all potentials can be set to~$0$.

We thus focus on the dynamic problems of (i) testing for negative
cycles and (ii) maintaining feasible potentials after edge weight updates.
The following lemma allows us to answer (i) as a direct byproduct of
(ii) and is similar to \cite[Thm. 3]{10.1145/321694.321699}:

\begin{lemma}\label{lem:infeasible_implies_negative_cycle}
  If $\phi$ is a feasible potential function for $G$ and $w$, $(G,w)$
  is consistent.
\end{lemma}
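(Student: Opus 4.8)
The plan is to reduce the claim directly to the telescope-sum identity already established in the proof of \cref{lem:johnson}. First I would fix an arbitrary cycle $C = (v_1, \ldots, v_k, v_{k+1})$ with $v_{k+1} = v_1$ and compute its potential weight $w_\phi(C)$. Since a cycle is just a path whose first and last vertex coincide, exactly the same telescoping as in the proof of \cref{lem:johnson} applies: the interior potentials cancel and one is left with $w_\phi(C) = w(C) - \phi(v_1) + \phi(v_{k+1})$. But now $v_{k+1} = v_1$, so the two surviving boundary terms cancel as well, giving the key identity $w_\phi(C) = w(C)$. In words, the potential is invisible to cyclic sums.

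Next I would invoke feasibility. By definition of a feasible potential, every edge satisfies $w_\phi(u,v) \ge 0$, and summing this inequality over the $k$ edges of $C$ yields $w_\phi(C) \ge 0$. Combining the two facts gives $w(C) = w_\phi(C) \ge 0$, so no cycle of $G$ has negative weight with respect to $w$. Since $(G,w)$ are consistent precisely when no negative cycle exists, this establishes the lemma.

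I expect no genuine obstacle here: the statement is essentially a corollary of \cref{lem:johnson}, and the only point deserving an explicit remark is that on a \emph{closed} walk the two remaining potential terms $-\phi(v_1)$ and $+\phi(v_{k+1})$ annihilate (rather than surviving as on an open path). If one wishes to be fully careful about degenerate cases, I would note that the argument is insensitive to whether $C$ is a simple cycle or an arbitrary closed walk, so it suffices to verify the inequality for simple cycles, which is exactly what consistency quantifies over.
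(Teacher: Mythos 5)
Your proposal is correct and matches the paper's own argument: both treat the cycle as a path with coinciding endpoints, apply the telescope identity from the proof of \cref{lem:johnson} to get $w_\phi(C) = w(C)$, and then use feasibility ($w_\phi(e) \ge 0$ edgewise) to conclude every cycle has non-negative weight. No differences worth noting.
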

\begin{proof}
  Consider any cycle $C$ in $G$ and observe that we can interpret $C$
  as a path with $v_1 = v_p$.
  Then, \Cref{eq:telescope_sum} implies that the weight of~$C$ is
  independent of the potential function, \ie
  $w_{\phi}(C) = w(C) + \phi(v_p) - \phi(v_1) = w(C).$
  Then, if $\phi$ is feasible, every edge in $G$ with respect to
  $w_{\phi}$ has a non-negative weight.
  Thus, every cycle must also have a non-negative total weight.
  But since $w_{\phi}(C) = w(C)$ for every cycle $C$, there cannot
  exist a negative weight cycle in $G$ if $\phi$ is feasible.
  \hfill
\end{proof}

Thus, our optimized implementation maintains a potential function
$\phi$ and checks whether an update is consistent by attempting to
maintain a feasible~$\phi$.
Further, we only update~$\phi$ under weight decreases since a weight
increase only increases the potential weight of the edge and thus
cannot break \emph{feasibility} of $\phi$.

However, as illustrated in \cref{fig:potentials}, decreasing an edge
weight $w(u,v)$ may \emph{break} the edge.
To make $\phi$ feasible again, we need to increase the gradient
$\phi(v) - \phi(u)$.
This can be achieved by (i) decreasing $\phi(u)$, (ii) increasing
$\phi(v)$, or (iii) both.
We begin with variant (ii) and discuss option~(iii) in
\cref{subsec:bidirectional}.
In any case, changing one potential may cause ``collateral damage''
by breaking additional edges.
The following lemma shows how to efficiently repair cascading
breakages using a single (partial) shortest path query.

\begin{lemma}\label{lem:potential_update}
  Consider weights~$w'$ obtained from $w$ by decreasing the weight of $(u, v)$.
  If $w'$ has no negative cycles, we can construct a feasible $\phi'$
  with $\phi'(x) = \phi(x) + \max\{0, -w'_{\phi}(u,v) - D(x)\}$ where
  $D(x)$ is the length of a shortest \nodepath{v}{x} path with
  respect to $w_{\phi}$.
\end{lemma}
\begin{proof}
  Note that if $\negbroken > 0$, we have $\max\{0, \broken - D(x)\} =
  0\ \forall x\in V$ since $D(x)$ is non-negative as $\phi$ is
  feasible for $w$ and $w'_{\phi}(u,v)$ does not introduce a negative cycle.
  Therefore, $\phi$ remains feasible without updating.

  Hence, assume that $\negbroken \leq 0$.
  Observe that $\phi'(x) = \phi(x)$ if $D(x) \geq \broken$.
  Thus, we only need to consider nodes $x$ which have a shortest
  \nodepath{v}{x} path of length at most $\broken$.
  This allows us to heavily prune the shortest path tree~$T$ by only
  including nodes $x$ with $D(x) \leq \broken$.
  Because $\phi$ is feasible for $w$ and due
  to~\cref{obs:new_weight_in_cycle,lem:infeasible_implies_negative_cycle},
  $T$ always exists and does not contain~$u$.

  We now show that all edges have non-negative edge
  weights~$w_{\phi'}({\cdot})$.
  Since we set $\phi'(v) = \phi(v) - \negbroken - D(v) = \phi(v) -
  \negbroken$, we get $w'_{\phi'}(u,v) = 0$ and $(u,v)$ is no longer
  \emph{broken}.
  Now consider any edge~$(x,y) \in E$, where the edge~$(x,y)$ is

  \begin{itemize}
    \item \ldots a tree edge in the shortest path tree $T$ (if the
      weight change does not introduce a negative cycle). Then, the
      new potential weight~$w'_{\phi'}(x,y)$ is
      {\small
        \begin{align}
          w'_{\phi'}(x,y)
          & = w_{\phi}(x,y) + [\broken - D(y)] - [\broken - D(x)]
          \nonumber                         \\
          & =w_{\phi}(x,y) + D(x) - D(y) \nonumber
          \\
          & = w_{\phi}(x,y) + D(x) - \left[D(x) +
          w_{\phi}(x,y)\right] = 0 \label{eq:edges_in_tree}
      \end{align}}

    \item \ldots not a tree edge, but both nodes $x$ and $y$ are in $T$.
      Observe that $D(x) + w_{\phi}(x,y) \geq D(y)$ and
      \cref{eq:edges_in_tree}: the new potential weight is non-negative.

    \item \ldots not in~$T$, and only $x$ is in~$T$. Then, we know
      \begin{equation*}
        D(x) + w_{\phi}(x,y)\ \                 \geq \ \ \broken
        \quad\quad\Leftrightarrow\quad\quad
        w_{\phi}(x,y)\ \  \geq \ \ \broken - D(x).
      \end{equation*}

      Thus $w'_{\phi'}(x,y) \ge 0$ after only adjusting $\phi(x)$
      since the decrease is less than the current value.
      Further, if we find $y$, but not $x$ the potential weight can
      only increase and thus not break the edge.
  \end{itemize}

  \noindent
  Finally, if $x, y \not \in T$, the non-negative weight
  $w'_{\phi'}(x,y) = w_\phi(x,y)$ stays unchanged.
\end{proof}

\begin{lemma}\label{cor:update_runtime}
  Checking if $\weightUp{w}{(u,v)}{c} \in \setCons$
  (Line~\ref{line:sampler_consistent} of Alg.~\ref{algo:sampler})
  runs in time $\Oh(m + n\log n)$.
\end{lemma}
\begin{proof}
  If an edge weight increases or $\negbroken \geq 0$, we directly
  accept keeping $\phi$ unchanged.
  Otherwise, compute a shortest path tree~$T$ from node $v$ and
  maximum distance \broken, \ie stop if every unvisited node $x$ has
  a shortest $\nodepath v x$ path of weight of at least \broken.
  If we find the target~$u$ with $D(u) < \broken$, reject the update
  as it causes a negative cycle (\cf
  \cref{obs:new_weight_in_cycle,lem:infeasible_implies_negative_cycle}).
  If $u$ is not found, \ie $D(u) \ge \broken$, accept and update
  $\phi$ as in~\cref{lem:potential_update}.
  Since $\phi$ is feasible before, every edge has a non-negative
  potential weight, and we can use \algdk to compute $T$ or find a
  path that leads to rejection.
\end{proof}

\subsection{Bidirectional search}\label{subsec:bidirectional}
The algorithm outlined in the previous section executes \algdk to
obtain shortest path distances to nodes with distance at most $\broken$.
In practice, we can often further prune the search space:
For exposition, assume we sampled weight~$c$ for edge~$(u,v)$
introducing a negative cycle.
We detect this, by finding a $\nodepath v u$ path~$P$ with $w(P) + c
< 0 \Leftrightarrow w(P) < -c$.
We may have found~$P$ faster through a bidirectional search, \ie we
run a forward search from node $v$, and a backward search from $u$
(reversing each edge).
Then, interleaving the forward- and backward searches eventually
yields a node~$x$ with $w_{\phi}(\nodepath v x) + w_{\phi}(\nodepath
x u) < \broken$.
It is known both in theory and practice, that this can reduce the
search space
dramatically~\cite{DBLP:journals/jea/BorassiN19,DBLP:conf/iwoca/BlasiusW23,DBLP:journals/talg/BlasiusFFKMT22}.

As illustrated in \cref{fig:potentials} (Solution 2), the
bidirectional search also often accelerates accepted updates,
especially if there is a broken edge.
Consider an update of $w(u,v)$ that breaks the edge and requires
increasing the gradient $\phi(v) - \phi(u)$ by $\broken$.
Following the unidirectional pattern of \algdk, we previously only
increased~$\phi(u)$ and kept $\phi(v)$.
The main insight is that we can achieve the same effect by
distributing the changes over both endpoints, \ie by reducing
$\phi(u) - \Delta_u$ and increasing $\phi(v) + \Delta_v$ with
$\Delta_u + \Delta_v = \broken$.

This strategy directly repairs the broken edge~$(u,v)$ and may still
break out-edges of $v$ (and successors).
As before, these cascading defects can be dealt with using
\cref{lem:potential_update}.
Additionally, the new strategy can also break in-edges of $u$ (and
predecessors).
These can be fixed through a mirrored variant of
\cref{lem:potential_update} by decreasing (rather than increasing)
potentials according to $\nodepath x u$ paths (rather than $\nodepath
v x$ paths).
We give empirical evidence in \cref{sec:experiments} that a
bidirectional search results in a significantly faster algorithm.

\begin{lemma}\label{lem:potential_rev_update}
  For a weight decrease of edge $(u,v)$, let $w'_{\phi}(u,v)$ be the
  new potential weight.
  If $w'$ remains free of negative cycles, we can construct a
  feasible $\phi'$ with $\phi'(x) = \phi(x) - \max\{0,
  -w'_{\phi}(u,v) - D(x)\}$ where $D(x)$ is the length of a shortest
  \nodepath{x}{u} path with respect to $w_{\phi}$.
\end{lemma}

\begin{proof}
  This lemma is the symmetric version of \cref{lem:potential_update}.
  The proof is analogous to the proof of~\cref{lem:potential_update}
  with the inverted graph and potentials.
  \hfill
\end{proof}

The distribution between $\Delta_u$ and $\Delta_v$ does not affect correctness.
Our implementation simply alternates between forward-- and backward
search until it is established that there is no $\nodepath v u$ path
of length at most $\broken$.
This also yields the disjoint partial shortest path trees $T_u$
around $u$, and $T_v$ respectively, used for potential updates.

\subsection{Rejecting the rejection}\label{subsec:rej_rej}

As we will later see, the current chain eventually settles on a total
rejection rate exceeding $40\%$ in all parameter settings.
Hence, in almost $50\%$ of cases, we ``discard'' our search and do
not change $w$, which significantly slows convergence, as we do
nothing while incurring the cost of a search.
We, however, can prevent this behavior entirely by extending
\cref{algo:sampler} to include an additional re-sampling step of $c$
\emph{after} ``rejecting'' in line $5$.
We again point to \cref{obs:new_weight_in_cycle} and note that we
accept any new weight $c$ for $e = (u, v) \in E$ iff $c \in \wInt_e$
where $\wInt_e$ is defined below.

\begin{definition}\label{def:acceptable_wint}
  For edge $e = (u, v) \in E$ and consistent weights $w$, define the
  set of acceptable weights as $\wInt_{e} \coloneqq \setc{x \in
  \wInt}{x \geq -w(P)}$ where $P$ is the shortest \nodepath{v}{u}
  path in $(G, w)$.
\end{definition}

Hence, to resample an acceptable new weight $c'$ for $w(e)$, we can
use the path length $w(P)$, which we computed to reject the
originally proposed weight $c$, to lower bound $c'$.
We dub this altered scheme \norej.

Unfortunately, in practice, one might not compute the true shortest
\nodepath{v}{u} path $P$ in $(G, w)$ (and thus $\wInt_e$) to reject
$c$ but rather any \nodepath{v}{u} path $P'$ with $c < -w(P')$ (\eg
reject at the first observed smaller path).
Therefore, to guarantee that the newly resampled weight $c'$ also
does not induce a negative cycle, we continue with the (rejecting)
search in two ways:
\begin{itemize}
  \item[(i)] Continue the search until the truly shortest
    \nodepath{v}{u} path $P$ is found and resample $c'$ once
    afterwards in the correct set $\wInt_e$; or:
  \item[(ii)] Directly resample $c'$ in a superset of $\wInt_e$ and
    continue the search until you either accept or reject and
    resample $c'$ again with now better bounds before repeating this procedure.
\end{itemize}
While option (i) only requires one resampling, it also requires us to
always find the shortest \nodepath{v}{u} path, which can be expensive.
On the other hand, option (ii) might need multiple resampling steps,
but also still allows for early termination by acceptance before
fully discovering $P$.\footnote{
  Implementations of both methods are practically identical.
  Option (i) replaces the `resampling' step by always `resampling'
  the lower bound --- guaranteeing a continuance of the search until
  $P$ is found.
}
We will see in \cref{sec:experiments} that option (ii) is by far the
most efficient version for our process.
Note that in all cases (original chain, option (i), option (ii)), we
only accept iff the (final) sampled weight lies in $\wInt_e$.
We dub option (i) and (ii), \mincyc and \resamp, respectively.

For completeness, we briefly prove that these variations also
converge to the uniform distribution on $\setCons_{G, \wInt}$.
\begin{theorem}\label{thm:convergence-mc-no-rej}
  Let $G = (V, E)$ be a directed graph and $\wInt$ a weight interval.
  Then, the \norej~process converges to a uniform distribution on
  $\setCons_{G, \wInt}$.
\end{theorem}
\begin{proof}
  The proof follows the same steps as the proof of~\cref{thm:convergence-mc}.
  Since the process operates on the same state graph as
  \cref{algo:sampler} and only varies in transition probabilities,
  properties of \underline{irreducibility} and
  \underline{aperiodicity} still hold, and we only need to show
  \underline{symmetry} again:

  Let $w_1, w_2 \in \setCons, w_1 \neq w_2$ with $w_1 \in N[w_2]$.
  There must exist a unique $e = (u, v) \in E$ with $w_1(e) \neq w_2(e)$.
  Since $w_1$ and $w_2$ only differ in $e$, and both yield consistent
  weight functions, the weight of the shortest \nodepath{v}{u} path
  $P$ is identical in $(G, w_1)$ and $(G, w_2)$, and both weights
  share the same $\wInt_e$.
  Furthermore, by \cref{obs:new_weight_in_cycle}, we have $w_1(e),
  w_2(e) \in \wInt_e$.

  By definition of \norej, the new weight $c$ (or $c'$) is a uniform
  sample in a superset of $\wInt_e$, and thus also a uniform sample
  in $\wInt_e$ if it gets accepted.
  Thus, to transition from $w_1$ to $w_2$ ($w_2$ to $w_1$), we must
  first sample $e$ with probability $1/m$ and then $w_2(e)$
  ($w_1(e)$) with probability $1/|\wInt_e|$ (even while  possibly not
  knowing $|\wInt_e|$).
\end{proof}

\section{Empirical evaluation}\label{sec:experiments}
We empirically study our chains and the properties of the produced output.
We also give performance indications on the previously discussed
implementations on random graph models.
All performance-critical sections are implemented in
\textsc{Rust}\footnote{Compiled with \texttt{rustc-1.96.0-nightly}.
Performance roughly on par with C~\cite{DBLP:conf/clei/CostanzoRNG21}.}.
All runtime measurements were taken on a server with an \texttt{AMD
EPYC 7702P 64-Core} processor with $512$GB of RAM.
Pseudo-random numbers are generated with the \texttt{Pcg64}
generator~\cite{pcgrng}.

\subparagraph*{Parameters}
The implementations are relatively straightforward, with the
exception of avoiding negative cycles.
Here, we implement both the \algdk and \algbd variants of \rej as
well as both discussed variants of \norej.
Since applying~\cref{lem:potential_update} leads to a high number of
edges with \emph{potential weight} $0$, we modify \algdk as follows:
If we settle a node, we also directly settle all neighbors reachable
by zero-weight edges without pushing them into the priority queue.\footnote{
  In initial experiments, this heuristic proved to be \textbf{only}
  effective for \algdk.
}

We consider Gilbert's $\mathcal{G}(n, p)$
model~\cite{gilbert1959random} with $p = \avgdeg / n$ to simulate a
graph with average degree $\avgdeg$ which we dub $\gnp(n, \avgdeg)$
and the hyperbolic graph model $\rhg(n, \avgdeg)$~\cite{gen/rhg} with
dispersion parameter $\alpha = 0$ and $T = 0$, where we replace an
undirected edge by two directed edges.
We also derive a real-world dataset \roads from US state road
networks obtained from the TIGER/Line dataset, available on the 9th
DIMACS Implementation Challenge
website.\footnote{\url{https://www.diag.uniroma1.it/challenge9/data/tiger/}}
We replace networks with $n\geq \num{100000}$ with a connected
induced subgraph of $\num{100000}$ nodes, collected via BFS.
We also replace each undirected edge by two directed edges.

We implement four different initial consistent weight functions:
\wmax (start with all weights set to $\max(\wInt)$),
\wzero (start with all weights set to $0$),
\wone (start with all weights set to $1$),and
\wunif (weights are independently drawn uniformly from $[0,\max(\wInt)]$).
Unless stated otherwise, we set $\wInt = [-100,100] \cap \mathbb{N}$
and $n = \num{10000}$ for \gnp and \rhg graphs.
All averages are computed via the arithmetic mean.

\subsection{Selecting the best implementation}
Our main goal is to study the performances of our different
algorithms and chains to find the implementation that yields an
approximate uniform distribution the fastest.
For that, we first compare the different search algorithms for \rej
before comparing them against implementations of \norej.
For that, we ran experiments on \gnp and \rhg graphs with average
degrees $\avgdeg \in \set{10, 20, 50}$ as well as graphs from $\roads$.
We start with all four different initial weight functions, repeating
each datapoint $100$ times.

\begin{table}[h!]
  \caption{Average speedup of \algbd over \algdk after $10^8$ rounds (\rej).}
  \label{tab:cmp-search}

  \begin{center}
    \begin{tabular}{c||c|c|c|c|c|c|c}
      \textbf{Graph} & \multicolumn{3}{c|}{$\gnp(n = 10^5)$} &
      \multicolumn{3}{c|}{$\rhg(n = 10^5)$} & \multirow{2}{*}{\roads}
      \\\cline{1-7}
      \textbf{Degree} & $10$ & $20$ & $50$ & $10$ & $20$ & $50$ \\ \hline \hline
      Speedup & $27.3052$ & $29.0018$ & $25.8080$ & $13.9757$ &
      $28.5909$ & $38.7055$ & $1.2643$
    \end{tabular}
  \end{center}
\end{table}

\subparagraph*{Optimal Algorithm}
First, we want to verify whether a bidirectional search via \algbd
does yield a better runtime than a one-directional search via \algdk.
The implicit assumption that a naive implementation of \rej using
\algbf performs by far the worst was confirmed in initial experiments
and is thus no longer considered.

\Cref{tab:cmp-search} supports our hypothesis as in almost all
instances, \algbd averages a speedup\footnote{Speedup is computed via
the total running time (all previous rounds) thus far.}
of more than $25$ over \algdk, showing a significant runtime improvement.
For graphs in $\roads$, we observe a smaller speedup of roughly
$25\%$ since these graphs are very sparse, leading to narrow search
trees in both algorithms.

\begin{table}[h!]
  \caption{
    Comparison of \algbd (\rej), \resamp, and \mincyc after $10^8$ rounds:
    \textsc{AccRate} is the acceptance rate of \rej, \textsc{Speedup}
    the speedup of \resamp and \mincyc over \algbd, and
    \textsc{AccSpeedup} the speedup per accepting round (\ie
    $\textsc{Speedup} / \textsc{AccRate}$). \textsc{Resamples} is the
    average number of times \resamp and \mincyc had to `resample' in a search.
  }
  \label{tab:cmp-chains}

  \begin{center}
    \begin{tabular}{c|c||c|c|c|c|c|c|c}
      \multicolumn{2}{c||}{\textbf{Graph}} &
      \multicolumn{3}{c|}{$\gnp(n = 10^5)$} &
      \multicolumn{3}{c|}{$\rhg(n = 10^5)$} & \multirow{2}{*}{\roads}
      \\\cline{1-8}
      \multicolumn{2}{c||}{\textbf{Degree}} & $10$ & $20$ & $50$ &
      $10$ & $20$ & $50$ \\ \hline \hline
      \multicolumn{2}{c||}{\textsc{AccRate} (\rej)} & $0.5921$ &
      $0.5472$ & $0.5176$ & $0.5465$ & $0.5246$ & $0.5102$ & $0.6557$ \\ \hline
      \multirow{2}{*}{\textsc{Speedup}} & \resamp & $0.8620$ &
      $0.8980$ & $0.8655$ & $0.8392$ & $0.8374$ & $0.8161$ & $0.8580$ \\
      & \mincyc & $0.2861$ & $0.2732$ & $0.2499$ & $0.3199$ &
      $0.2401$ & $0.1898$ & $0.6959$ \\ \hline
      \multirow{2}{*}{\textsc{AccSpeedup}} & \resamp & $1.4561$ &
      $1.6413$ & $1.6722$ & $1.5363$ & $1.5974$ & $1.5996$ & $1.3087$ \\
      & \mincyc & $0.4832$ & $0.4992$ & $0.4829$ & $0.5855$ &
      $0.4578$ & $0.3720$ & $1.0612$ \\ \hline
      \multicolumn{2}{c||}{\textsc{Resamples} (\resamp)} & $1.2485$ &
      $1.2539$ & $1.2550$ & $1.1595$ & $1.1890$ & $1.2090$ & $1.0150$
    \end{tabular}
  \end{center}
\end{table}

\subparagraph*{Optimal Chain}
Similarly, we also want to determine whether eliminating rejections
from \rej yields any practical benefit.
While we do expect the average runtime per round to go up, as we
potentially do more work per round (resampling in case of rejection,
continuing the search), we anticipate that the average runtime per
accepted weight change goes down.
We show later that this is a reasonable metric as the asymptotic
behaviour of the underlying markov chain is identical under accepted
weight changes.
Given the previously observed superiority of \algbd, we consider only
a bidirectional approach going forward.

\Cref{tab:cmp-chains} confirms our hypotheses.
Firstly, more than one third of all rounds in \rej were rejected,
sometimes even up to almost $50\%$.
This shows that the underlying Markov chain of \rej is almost lazy in
most cases.
We also confirm that the additional overhead of resampling and
continuing the search does slow down \resamp and \mincyc initially by
a factor of $\sim0.85$ and $\sim0.25$, respectively.
However, when averaging the runtime over accepted rounds, this flips
for \resamp, which now achieves a roughly $50\%$ speedup over \rej.
Nonetheless, even then, \mincyc is still circa twice as slow as \rej.
The number of average resamples needed reaffirms this as \resamp on
average succeeds with almost only one resample.

Combining results from \cref{tab:cmp-search,tab:cmp-chains}, for a
metric of accepted rounds, a bidirectional \resamp is more than
$30\times$ faster than the basic \algdk implementation for \rej in
most cases --- barring an even slower naive implementation of \algbf.
Lastly, for $n = 10^5$, \resamp only requires $4.6\mu s \sim 5.6\mu
s$ per round for \gnp graphs and $0.7\mu s \sim 1.2\mu s$ per round
for \rhg graphs on average.
For $\roads$, each round only takes $1\mu s$ on average.

\subsection{Output Metrics}
Apart from the runtime per round, the number of total rounds to reach
an approximately uniform distribution is of interest.
In practice, the \emph{mixing time} of the MCMC is often investigated
by considering a collection of metrics which are regarded as proxies
for the mixing state of the underlying chain.

We consider two metrics as proxies: (i) the average weight of the
graph and (ii) the fraction of edges that have negative weights.
We again consider \gnp and \rhg graphs with $n = 10000$ and $\avgdeg
\in \set{10, 20, 50}$ and run the chain for $2.1 \cdot 10^8$ ($10^8$
for \roads) rounds.

\begin{figure}[tb]
  \centering
  \begin{subfigure}{0.5\textwidth}
    \centering
    \includegraphics[width=0.9\textwidth]{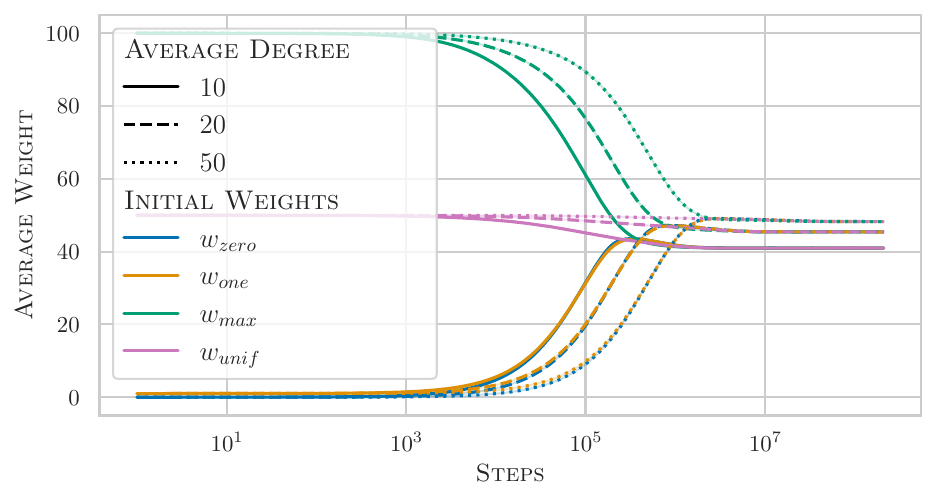}
  \end{subfigure}%
  \begin{subfigure}{0.5\textwidth}
    \centering
    \includegraphics[width=0.9\textwidth]{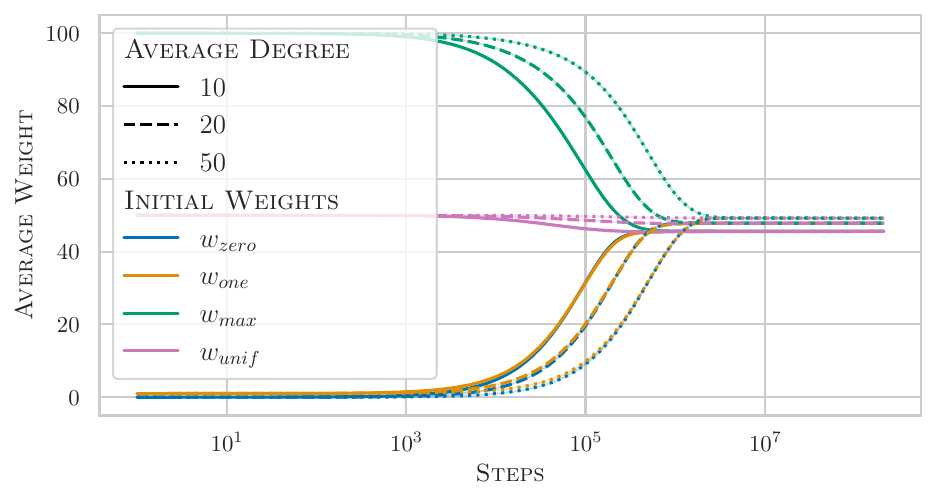}
  \end{subfigure}
  \caption{
    Average weight over time for $\gnp$ (l), $\rhg$ (r).
  }
  \label{fig:avg_weight}
\end{figure}

\begin{figure}[tb]
  \begin{subfigure}{0.5\textwidth}
    \centering
    \includegraphics[width=0.9\textwidth]{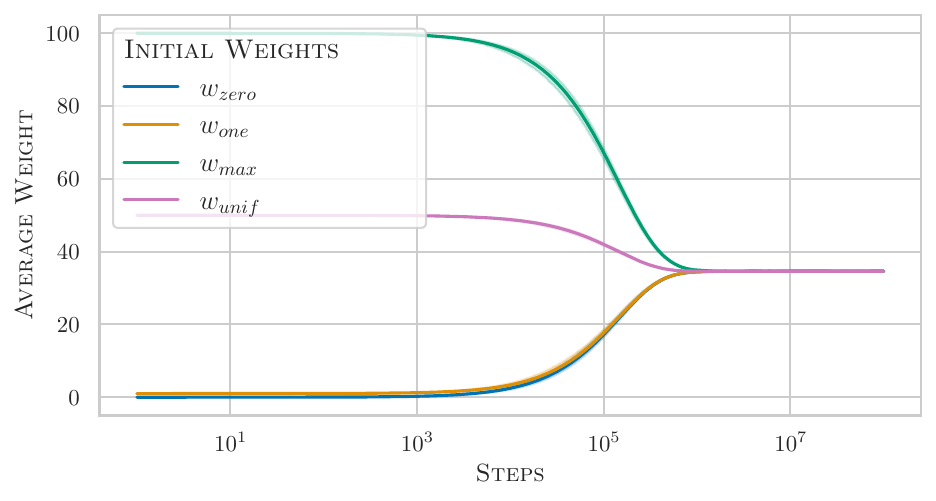}
  \end{subfigure}
  \begin{subfigure}{0.5\textwidth}
    \centering
    \includegraphics[width=0.9\textwidth]{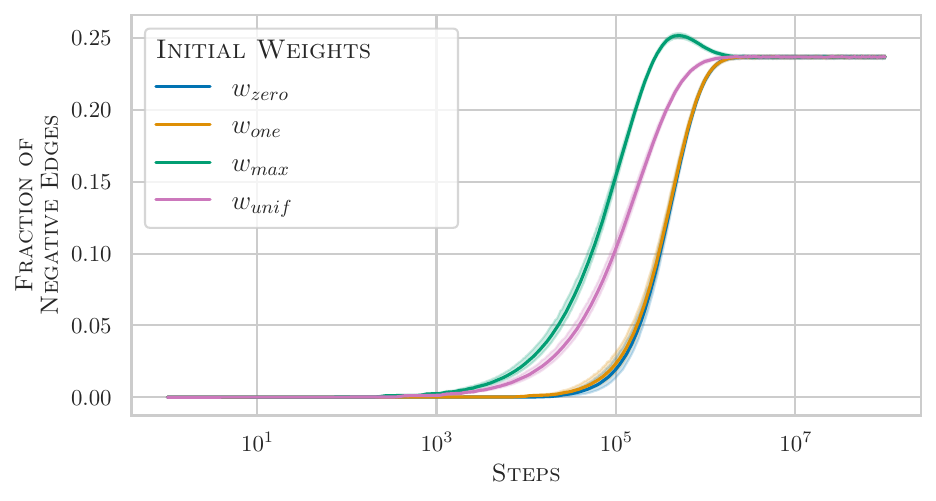}
  \end{subfigure}
  \caption{
    Average weight (l) and fraction of negative edges (r) over time
    on $\roads$ networks.
  }
  \label{fig:road_networks}
\end{figure}

\subparagraph*{Average Weight}
We first consider the average total weight of all edges over time.
\Cref{fig:avg_weight} shows that the average weight stabilizes after
at most $2\cdot10^6$ steps for both graph classes.
There is also a slight bias towards a higher average weight if the
average degree is higher.
This is to be expected as a higher average degree equates to a higher
number of cycles and thus more restrictions on negative weights.
\begin{figure}[tb]
  \centering
  \begin{subfigure}{0.5\textwidth}
    \centering
    \includegraphics[width=0.9\textwidth]{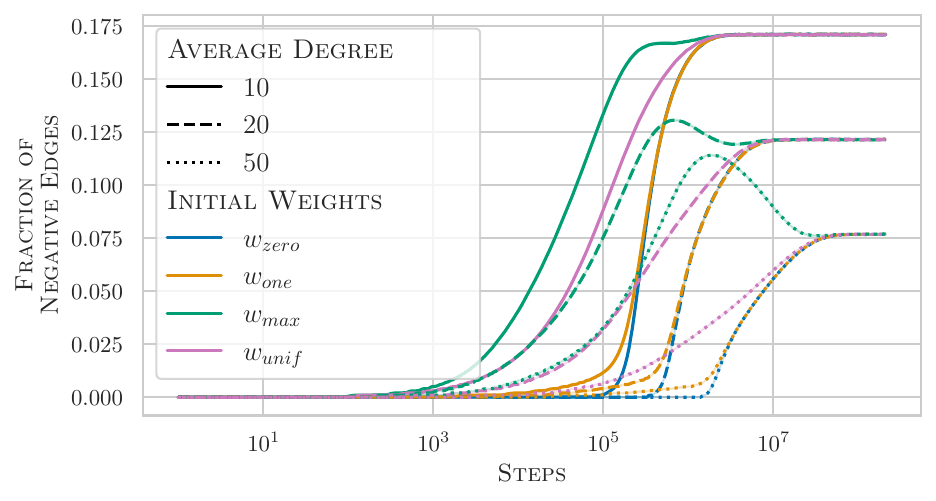}
  \end{subfigure}%
  \begin{subfigure}{0.5\textwidth}
    \centering
    \includegraphics[width=0.9\textwidth]{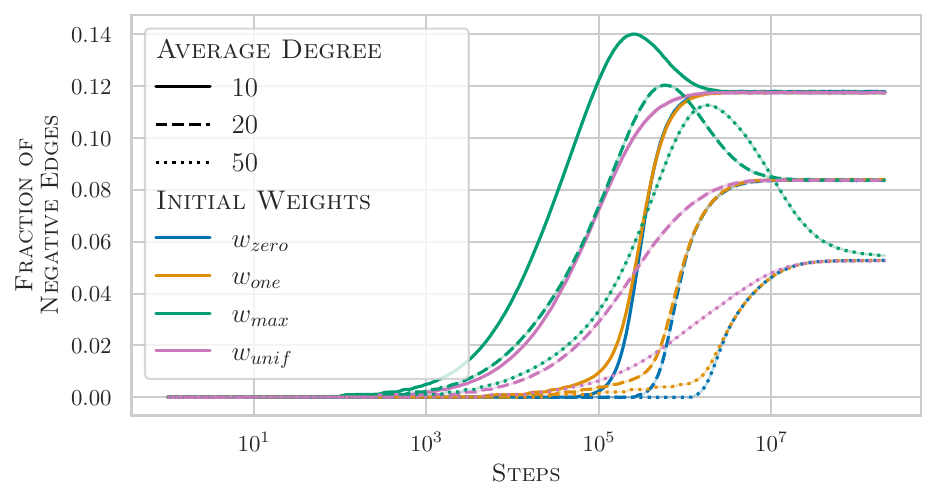}
  \end{subfigure}
  \caption{Fraction of negative edges over time for $\gnp$ (l), $\rhg$ (r).}
  \label{fig:negfrac}
\end{figure}

\subparagraph*{Fraction of Negative Edges}
We also consider the fraction of negative edges over time.
\Cref{fig:negfrac} summarizes the results.
Similarly to the average weight, we observe both a bias towards less
negative edges for bigger degrees as well as a stabilizing effect ---
although now after $10^6\sim10^8$ rounds, depending on average degree
and graph type.
Interestingly, for starting weights $\wmax$, the fraction of negative
edges initially ``overshoots'' before settling down again into a stable state.
This effect can be explained in part by the residual high-weight
edges that have not yet been chosen as the target of an update. The
chains require~$\Theta(m \log m)$ rounds to ensure that each edge is
selected at least once with high probability. Since for all
considered $\gnp$ and $\rhg$ instances we have $10^6 \leq m \log m
\leq 10^7$ it follows that, while these high-weight edges persist,
they permit a significantly larger proportion of negative edges
within any cycle containing them. Nonetheless, there still remains a
lingering, slowly vanishing effect, well beyond the stated $m\log m$ bound.
This effect seems to be driven by high-weight edges that persist well
into the process and are prevented to significantly decrease their
weight due to emerging constraints imposed by the, on average,
decreasing cycle weight. This argument derives additional credence
from two facts: (i) By \cref{fig:avg_weight,fig:road_networks} the
average weight stabilizes well before the fraction of negative edges
and (ii) the expected edge weight in absence of constraints is $0$ by
the choice of $\wInt$, hence as soon as the average weight converges
(bounded away from $0$), constraints seem prevent a further decrease.
The former suggests that there is still a significant amount of
high-weight edges when average weight convergence is reached while
the later indicates that these high-weight edges are unlikely to
decrease their weight quickly due to the significant amount of
constraints imposed by the (low-weight) cycles at this point. Despite
this, in all observed cases, the chain stabilizes in $o(m^2)$ rounds.

\subsection{Fast convergence on the
\boldmath$n$-cycle}\label{subsec:experiment_rapid_mixing}
Here, we aim to provide empirical evidence that, in practice,
comparatively low mixing times suffice for the $n$-cycle and $\wInt =
\set{-1, 0, 1}$.

\begin{figure}
  \centering
  \begin{subfigure}{0.5\textwidth}
    \centering
    \includegraphics[width=0.9\textwidth]{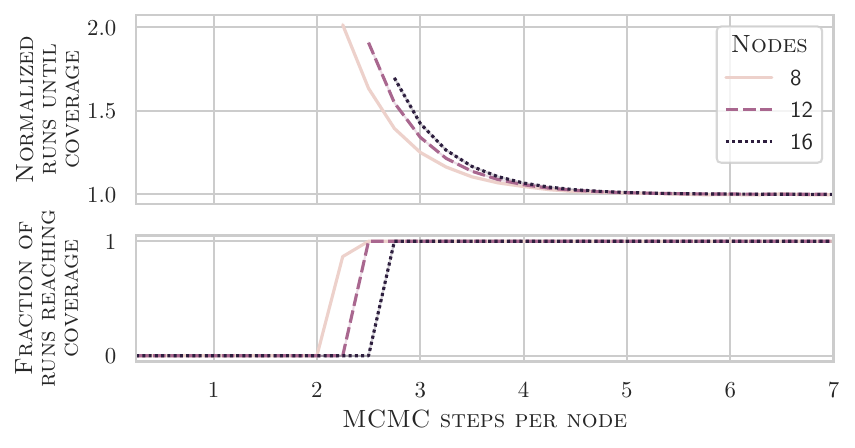}
  \end{subfigure}%
  \begin{subfigure}{0.5\textwidth}
    \centering
    \includegraphics[width=0.9\textwidth]{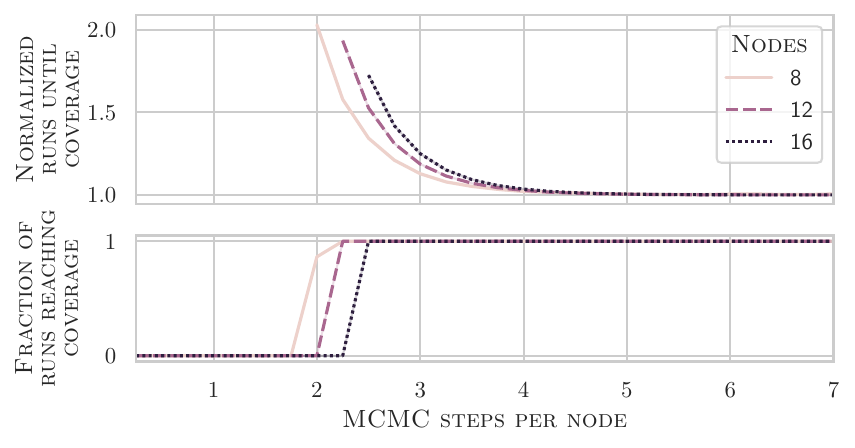}
  \end{subfigure}
  \caption{
    Runs to see $99\%$ of $\setCons_{C_n, \wInt}$ on $n$-cycle for
    $\wInt = \set{-1, 0, 1}$ as function of MCMC steps for $\rej$ (l)
    and $\norej$ (r).
  }
  \label{fig:cycle}
\end{figure}

\subparagraph*{Methodology}
The experimental design is inspired by the Coupon Collector's problem.
We obtain a large number~$k$ of independent samples by running
\mcsteps~steps of the Markov Chain each starting from the initial
weight $(0, \ldots, 0)$.
Since each run is independent, some obtained weights are inevitably duplicates.
However, we know from the Coupon collector's problem that for
large~$n$, $k = \Theta(|\setCons_{C_n,\wInt}|
\log(|\setCons_{C_n,\wInt}|))$ samples are expected to be sufficient
to observe every item in $\setCons_{C_n,\wInt}$ at least once.
Since the last few elements incur significant noise, we stop earlier,
namely when $99\%$ of elements were seen.\footnote{We also ensure
  that all samplers can \emph{fully} cover~$\setCons_{C_n, \wInt}$ to
give additional  credence to the correctness.}
We call this quantity \emph{runs-until-coverage}.

\subparagraph*{Measurements}
We use an exact sampling algorithm\footnote{
  We draw each edge weight~$w(e)$ i.i.d. from $\wInt = \set{-1, 0,
  1}$, and restart if $\sum_e w(e)$ is negative.
  This yields a constant acceptance probability on the $n$-cycle.
} to establish the correct baseline of
$(4.6 \pm 0.1) |\setCons_{C_n,\wInt}|$ runs-until-coverage for $n \in
\set{8,12,16}$.
This linear dependency is plausible since each sample is a new item
with probability of at least $1/100$.
Then, we replace the exact sampler with the MCMC process for
different numbers of steps~\mcsteps and measure their runs-until-coverage.
This allows us to reject the hypothesis that the MCMC process is
identical to the exact sampler if \mcsteps is too small.
If coverage does not occur within $10 |\setCons_{C_n,\wInt}|$ steps
(\ie more than twice the baseline), we say that the run does not reach coverage.
\Cref{fig:cycle} contains the findings as function of~\mcsteps and
normalized by the exact sampler's runs-until-coverage.

Between $2n \le \tau \le 3n$ steps, we observe a sharp transition
from no runs achieving coverage, to all runs completing successfully
for both chains.
Observe that $\norej$ reaches coverage slightly earlier than $\rej$
since the MCMC steps of $\rej$ contains both successes and rejections
while $\norej$ by design has only successful rounds.
This provides strong evidence, at least on the $n$-cycle, that $\rej$
and $\norej$ display a similar convergence rate.
Comparing $n=8$ and $n=16$, we see a slight increase in
runs-until-coverage that is consistent with a mixing time of $\Theta(n \log n)$.

\subsection{Identical Convergence}
The previous experiment is only feasible for very small graphs and
thus not practical for our main datasets.
Thus, we want to provide further evidence that both \rej and \norej
(i) converge relatively fast and (ii) converge at the same rate when
only counting accepting rounds in \rej.
Hence, we run statistical tests on meta-distributions of weights over
time and show that convergence behaviour is practically identical for
\rej and \norej.

\subparagraph*{Weights}
We restrict ourselves to \gnp and \rhg graphs with $n = 100$ and
average degrees $\avgdeg \in \set{10, 20}$.
We then simulate both \rej (\algbd) and \norej (\resamp) on each
graph $n \cdot \avgdeg \cdot 10$ $(\in \set{\num{10000},
\num{20000}})$ times per possible starting weight ($\wmax, \wzero,
\wone, \wunif$).
After $2^i$ \textbf{accepting} rounds in each chain, we take a
snapshot of the entire current weight vector of the graph.

\subparagraph*{Analysis}
For each type of graph and time step $s$, we have up to
$\num{40000}/\num{80000}$ weight distributions of length
$\num{1000}/\num{2000}$ and thus a distribution over distributions.
To compare two such distributions using common statistical tests, we
need to find a representative value for each weight vector yielding a
distribution over representatives.
Natural examples are minimum, maximum, mean, and taking the weight of
a specific edge (index).
\begin{figure}[!htb]
  \centering
  \begin{subfigure}{0.5\textwidth}
    \centering
    \includegraphics[width=0.9\textwidth]{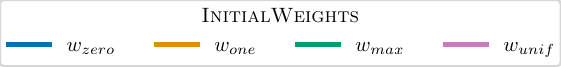}
  \end{subfigure}
  \begin{subfigure}{0.9\textwidth}
    \centering
    \includegraphics[width=0.9\textwidth]{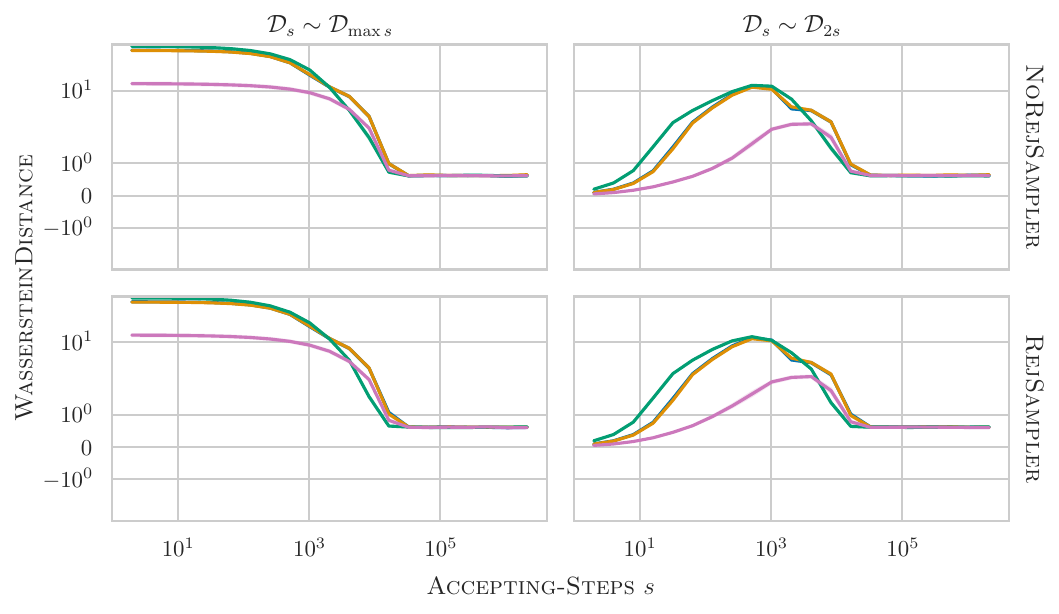}
  \end{subfigure}
  \begin{subfigure}{0.9\textwidth}
    \centering
    \includegraphics[width=0.9\textwidth]{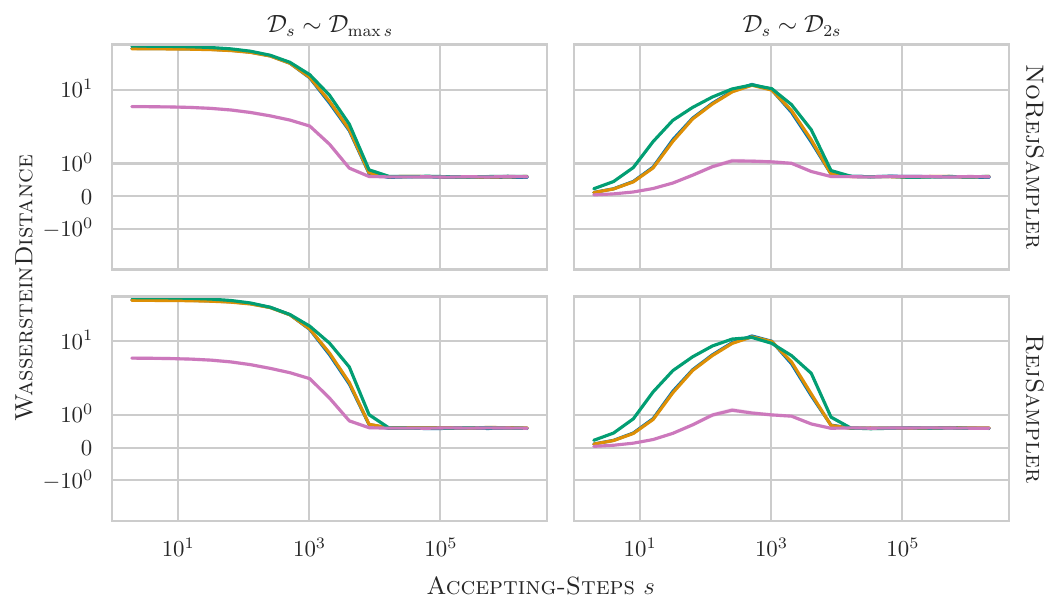}
  \end{subfigure}
  \caption{%
    Wasserstein-Distances for weight-distributions of $\gnp(n = 100,
    \avgdeg = 10)$ (top) and $\rhg(n = 100, \avgdeg = 10)$ (bottom) graphs.
    Weights are categorized by sampler and initial weight function;
    meta-distributions are then constructed for each edge (index),
    evaluated, and then aggregated in these plots.
    \vspace{-1em}
  }
  \label{fig:ws_stats}
\end{figure}

We then compare two meta-distributions using statistical tests to
determine how close they are.
We do this in two ways: (i) comparing distributions at step $s$ with
distributions at step $2s$, and (ii) comparing distributions at step
$s$ with distributions at step $s_{max}$ where $s_{max}$ is the
highest recorded step.
(i) determines how many changes occur between $s$ accepting rounds,
whereas (ii) determines how close a distribution at step $s$ is to a
final distribution at step $s_{max}$, which we assume to be a fairly
good approximation of a uniform distribution.

\subparagraph*{Results}
\Cref{fig:ws_stats} shows the results of a statistical analysis using
the Wasserstein distance~\cite{wasserstein} for graphs with average
degree $\avgdeg = 10$ (see Appendix for $\avgdeg = 20$: \Cref{fig:ws_stats_20}).
Although different types of graphs, we observe an almost
indistinguishable behaviour in all cases.
For the comparison of $s$ and $s_{max}$ (left), meta-distributions
differ entirely from the final target distribution before converging
around $10^4$ accepting steps.
On the other hand, when comparing distributions at $s$ and $2s$ steps
(right), we see a clear trend where initially similar distributions
first stray away from each other before converging again at $10^4$ steps.

In both cases, distributions start converging after $10^3$ steps,
coinciding with the number of edges.
After circa $10^4$ steps, the process enters a stable state.
While this is not a full measure of the converged mixing time of the
chains, it does show that the two chains, in fact, converge at a
similar rate when measured on a metric of accepting steps.
Paired with previous results, this reaffirms that (i) the chain
converges relatively fast in practice and that (ii) \resamp is the
most efficient algorithm to sample uniform negative edge weights.

\section{Summary and Future Work}
We introduce a maximum entropy model for signed edge weights,
propose an MCMC sampler, and show that it converges to the model's distribution.
We then engineer an implementation for the MCMC using a bidirectional
search and a novel resampling technique.
We empirically study the model, the sampler's convergence, as well as
trade-offs in the implementation to provide an optimized,
ready-to-use generator for uniform negative edge weights.

Future work might entail better theoretical bounds on the mixing time
of our chains.
The upper bound for the $n$-cycle, while polynomial, is highly
impractical and not in line with experimentally verified bounds in
\cref{subsec:experiment_rapid_mixing}.

It might also be interesting to further investigate the impact of
asymmetric weight intervals (around $0$); while initial experiments
indicated no significant impact on running time, convergence, or even
some output metrics, a more rigorous evaluation is needed.

Finally, it might also be interesting to study how recent advances in
negative weight shortest path algorithms fare on weights generated by
our maximum entropy model.

\bibliography{negative_edge_weights,dblp}

\clearpage

\appendix

\section{Additional experimental results}
\begin{table}[h!]
  \caption{Number of possible~$|\wInt^n|$ and
  consistent~$|\setCons_{C_n, \wInt}|$ weights on the $n$-cycle.}
  \label{tab:ncycle_acceptance}

  \begin{center}
    \begin{tabular}{l|r|r|r}
      \multicolumn{1}{c|}{$n$} & \multicolumn{1}{c|}{$|\wInt^n|$} &
      \multicolumn{1}{c|}{$|\setCons_{C_n, \wInt}|$} &
      \multicolumn{1}{c}{Acc. rate} \\\hline
      8                        & \num{6561}                       &
      \num{3834}                                     & \num{0.584}
      \\
      12                       & \num{531441}                     &
      \num{302615}                                   & \num{0.569}
      \\
      16                       & \num{43046721}                   &
      \num{24121674}                                 & \num{0.560}
      \\
    \end{tabular}
  \end{center}
\end{table}

\begin{figure}[!htb]
  \centering
  \begin{subfigure}{0.5\textwidth}
    \centering
    \includegraphics[width=0.9\textwidth]{figures/plots/stat_tests/legend.pdf}
  \end{subfigure}
  \begin{subfigure}{0.9\textwidth}
    \centering
    \includegraphics[width=0.9\textwidth]{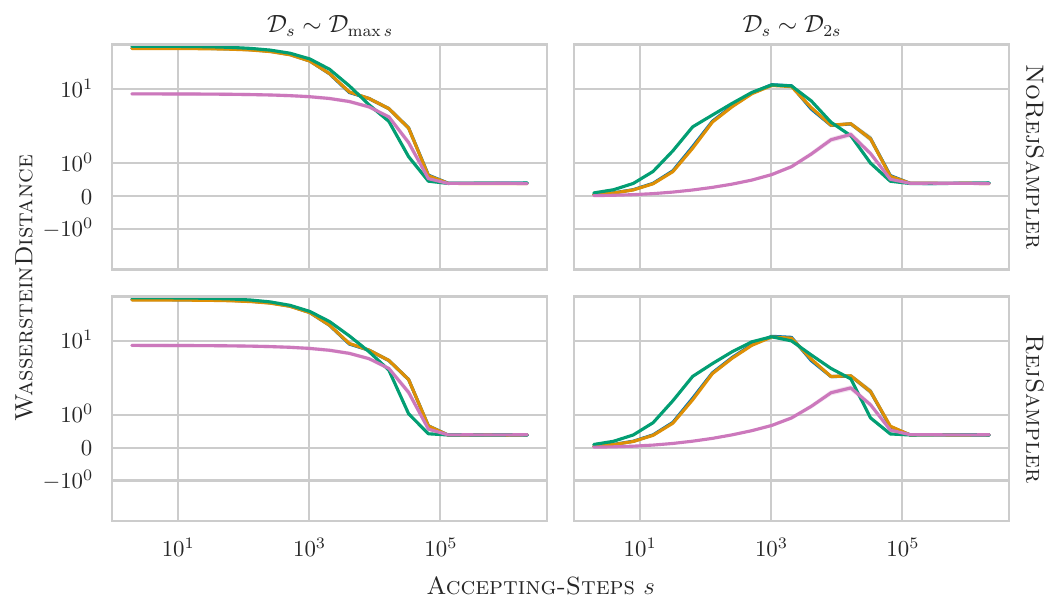}
  \end{subfigure}
  \begin{subfigure}{0.9\textwidth}
    \centering
    \includegraphics[width=0.9\textwidth]{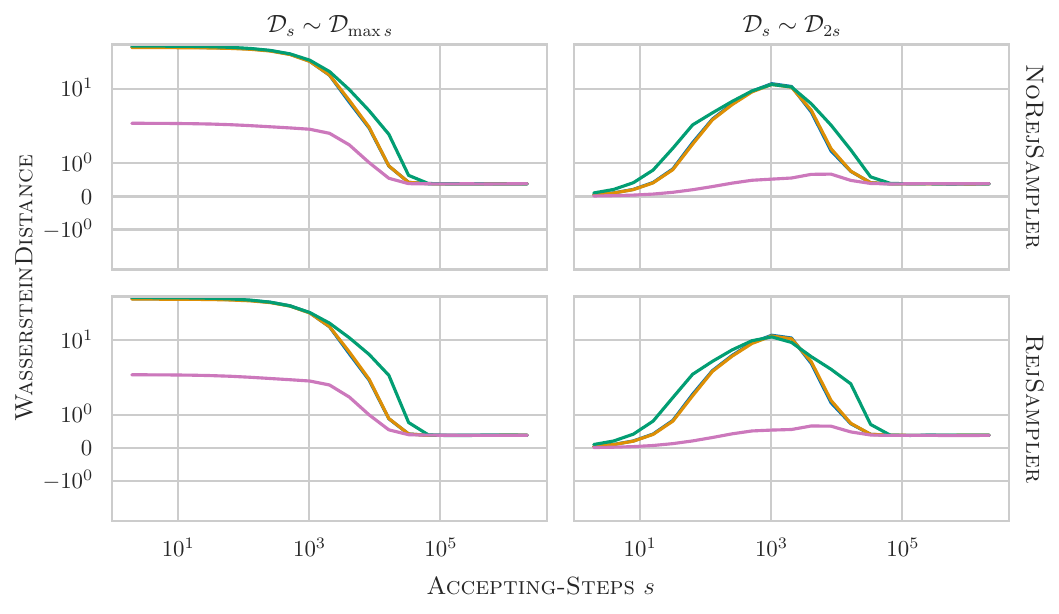}
  \end{subfigure}
  \caption{%
    Wasserstein-Distances for weight-distributions of $\gnp(n = 100,
    \avgdeg = 20)$ (up) and $\rhg(n = 100, \avgdeg = 20)$ (down) graphs.
    Weights are categorized by sampler and initial weight function;
    meta-distributions are then constructed for each edge (index),
    evaluated, and then aggregated in these plots.
    \vspace{-1em}
  }
  \label{fig:ws_stats_20}
\end{figure}

\def\mixtime{\ensuremath{\tau_{\epsilon}}\xspace}
\section{Rapid-Mixing on the $n$-Cycle}\label{sec:rapid-mixing}
We showed in \cref{thm:convergence-mc} that our algorithms converge
to a uniform distribution over $\wInt$ and saw in
\cref{sec:experiments} that in practice this happens in reasonable
polynomial time.
However, we still lack a theoretical upper bound on the mixing time.
As such proofs are generally hard and very involved, we restrict
ourselves to the $n$-Cycle $C_n$ and prove that our MCMC is rapidly
mixing on it for a set of integer weights $\wInt = [a,b]$.

Consider an ergodic Markov chain~$M$ with state space\footnote{
  Commonly, this is denoted by $\Omega$ in existing literature as
  opposed to $\setCons$.
} $\setCons$ and stationary distribution~$\pi$ (\ie the unique
configuration the ergodic~$M$ converges to).
The mixing time \mixtime of~$M$ describes the smallest number of
steps to reach a distribution that has at most $\epsilon$ distance
from $\pi$ (in $L_1$ norm) --- even for an adversarial starting configuration.
Formally, the mixing time is defined as
$$
\mixtime = \min \Big\{ \mcsteps \geq 0 : \max_{x \in \setCons} \Big[
    \sum_{y \in \setCons} |\pi(y) - \sigma_{\mcsteps,x}(y)| \leq \epsilon
\Big] \Big\},
$$
where $\sigma_{\mcsteps,x}$ is the distribution of states after
\mcsteps steps when starting on state $x$.

To show rapid mixing of the Markov chain underlying
\Cref{algo:sampler} for the $n$-cycle, we consider a variant that is
simpler to analyze.
This chain $\mathcal{M}$ is defined as follows.
Its state space $\setCons$ is the state space of the chain underlying
\Cref{algo:sampler}.
From the current state, the chain $\mathcal{M}$ transitions to the
next state as follows:
\begin{itemize}
  \item With probability $1 / 2$: remain in the current state.
  \item With probability $1 / 4$: choose edge $e$ and $b \in \{-1,
    1\}$ uniformly at random.
    If adding $b$ to the weight $w(e)$ yields a state in $\setCons$,
    move to this state.
  \item With probability $1 / 4$: choose edges $e_1$, $e_2$ uniformly at random.
    If incrementing $w(e_1)$ and decrementing $w(e_2)$ results in a
    state in $\setCons$, move to this state.
\end{itemize}
Since the same transitions are performed by the chain underlying
\Cref{algo:sampler} with a probability bounded from below by a
polynomial in $n$ and $(b - a)$, the mixing time of both chains can
be related by a polynomial in $n$ and $(b - a)$ (cf. \cite[Remark
13.19]{levin2017markov}).

The main result of this section is as follows.
\begin{theorem}
  The mixing time \mixtime of the Markov chain $\mathcal{M}$ for
  integer weights in $[a, b]$ on the $n$-cycle satisfies
  $
  \mixtime \leq 2^{9} n^{11} b^2 (b - a) \left(\log (b - a) + \log
  \epsilon^{-1} \right).
  $
  \label{thm:rapid-mixing}
\end{theorem}

\noindent
We employ a common argument to show Theorem \autoref{thm:rapid-mixing}.
Let $1 = \lambda_0 > \lambda_1 \geq \dots \geq \lambda_{|\setCons|} >
-1$ denote the eigenvalues of the transition matrix $P$ of $\mathcal{M}$.
Since $\mathcal{M}$ is lazy, e.g. we remain in the same state with
probability at least $1 / 2$, all eigenvalues are non-negative, and
it holds that $\mixtime \leq (1 - \lambda_1)^{-1} \left(\log
|\setCons| + \log \epsilon^{-1} \right)$, \eg~\cite[p. 4]{sinclair1992improved}.
Thus it suffices to bound $1 - \lambda_1$ from below.

Our idea is to decompose $\mathcal{M}$ into multiple Markov chains
$\mathcal{M'}$ and $\mathcal{M}_i$ where $i \in [0, nb]$.
Intuitively, $\mathcal{M'}$ moves between states that correspond to
the different sums of the weights in the $n$-cycle, and each
$\mathcal{M}_i$ moves between states that correspond to the same sum $i$.
Precisely, we partition $\setCons$ into the sets $\Omega_0, \dots
\setCons_{nb}$ where $\setCons_i$ contains all states with a sum of
weights~$i$ (n.b. since we consider a cycle graph, the sum of
consistent weights is always non-negative).
We then define the chain $\mathcal{M}_i$ with state space
$\setCons_i$ and transition matrix $P_i(x, y) = P(x, y)$ where $x
\neq y \in \setCons_i$ (with the total remaining probability, the
chain stays in the current state).
In addition, we define the chain $\mathcal{M'}$ with state space
$\setCons'~=~[0, nb]$ and transition matrix $P'(i, j)~=~\sum_{x \in
\setCons_i} \sum_{y \in \setCons_j} P(x, y)$ if $|i - j| = 1$ where
$i, j \in \setCons'$ and where again, with the total remaining
probability, the chain stays in the current state.

\begin{lemma}[Corollary 3.3 of~\cite{DBLP:journals/cpc/MartinR06}]
  \label{lem:mc-decomp}
  Let $\beta > 0$ and $\gamma > 0$ such that $P(x, y) \geq \beta$ for
  all $x, y \in \setCons$ where $P(x, y) > 0$, and
  $\pi(\delta_i(\setCons_j) \geq \gamma \pi(\setCons_i)$ for all $i
    \neq j$ where $P(i, j) > 0$ and $\delta_i(\setCons_j)$ is the
    subset of states in $\setCons_j$ where $P(i, j) > 0$.
    Then $\Gap(P) \geq \beta \gamma \Gap(P') \min_i \Gap(P_i)$.
  \end{lemma}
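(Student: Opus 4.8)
The plan is to prove the lemma from scratch via the Dirichlet-form comparison that underlies all disjoint-decomposition results, rather than invoking it. Recall that for a reversible chain the spectral gap obeys the variational identity
\[
\Gap(P) = \min_{f} \frac{\mathcal{E}_P(f,f)}{\mathrm{Var}_\pi(f)}, \qquad \mathcal{E}_P(f,f) = \tfrac12 \sum_{x,y \in \mixOmega} \pi(x) P(x,y)\,(f(x)-f(y))^2,
\]
the minimum running over non-constant $f\colon \mixOmega \to \mathbb{R}$. Hence it suffices to prove $\mathrm{Var}_\pi(f) \le (\beta\gamma\,\Gap(P')\min_i\Gap(P_i))^{-1}\,\mathcal{E}_P(f,f)$ for every $f$. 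Fix such an $f$, write $\pi_i$ for $\pi$ conditioned on $\mixOmega_i$, set $\pi'(i) = \pi(\mixOmega_i)$ (the stationary law of $P'$), and let $\bar f(i) = \sum_{x \in \mixOmega_i} \pi_i(x) f(x)$ be the block average, regarded as a function on $\mixOmega'$.

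The first step is the law of total variance, $\mathrm{Var}_\pi(f) = \mathrm{Var}_{\pi'}(\bar f) + \sum_i \pi(\mixOmega_i)\,\mathrm{Var}_{\pi_i}(f)$, which splits the variation into a \emph{between-block} part and a \emph{within-block} part. The within-block part is routine: the gap of each restriction chain gives $\mathrm{Var}_{\pi_i}(f) \le \Gap(P_i)^{-1}\mathcal{E}_{P_i}(f,f) \le (\min_k\Gap(P_k))^{-1}\mathcal{E}_{P_i}(f,f)$, and since $P_i$ agrees with $P$ on off-diagonal intra-block pairs while $\pi(\mixOmega_i)\pi_i(x) = \pi(x)$, summing yields $\sum_i \pi(\mixOmega_i)\,\mathcal{E}_{P_i}(f,f) = \mathcal{E}_P^{\mathrm{in}}(f,f)$, the part of $\mathcal{E}_P$ carried by edges internal to a block. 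Thus the within-block contribution is at most $(\min_i\Gap(P_i))^{-1}\mathcal{E}_P^{\mathrm{in}}(f,f)$, charged purely to intra-block energy.

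The between-block part is the crux, and is where the hypotheses enter. Using the projection gap, $\mathrm{Var}_{\pi'}(\bar f) \le \Gap(P')^{-1}\mathcal{E}_{P'}(\bar f,\bar f)$, the task reduces to comparing $\mathcal{E}_{P'}(\bar f,\bar f) = \tfrac12\sum_{i\ne j}\pi'(i)P'(i,j)(\bar f(i)-\bar f(j))^2$ with the cross-block part $\mathcal{E}_P^{\mathrm{cr}}$ of $\mathcal{E}_P$. For each active projection edge $(i,j)$ I would route the averaged difference $\bar f(i)-\bar f(j)$ through genuine one-step $P$-transitions landing in the entry set $\delta_i(\mixOmega_j)\subseteq\mixOmega_j$, writing it as a (flow-weighted) combination of realized crossing differences $f(x)-f(y)$ with $x\in\mixOmega_i$, $y\in\delta_i(\mixOmega_j)$, $P(x,y)\ge\beta$, and applying Cauchy--Schwarz to pass from the square of a combination to a combination of squares. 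The condition $\pi(\delta_i(\mixOmega_j))\ge\gamma\,\pi(\mixOmega_i)$ guarantees the boundary used as a relay carries enough stationary mass that this step loses only a factor $\gamma^{-1}$, while $P(x,y)\ge\beta$ turns each crossing transition into a contribution to $\mathcal{E}_P^{\mathrm{cr}}$ worth a factor $\beta^{-1}$, giving $\mathcal{E}_{P'}(\bar f,\bar f)\le(\beta\gamma)^{-1}\mathcal{E}_P^{\mathrm{cr}}(f,f)$.

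Finally I would combine the two estimates. Because the within-block bound is charged to $\mathcal{E}_P^{\mathrm{in}}$ and the between-block bound to the disjoint remainder $\mathcal{E}_P^{\mathrm{cr}}$, with $\mathcal{E}_P^{\mathrm{in}} + \mathcal{E}_P^{\mathrm{cr}} = \mathcal{E}_P$, one gets $\mathcal{E}_P(f,f) \ge \min_i\Gap(P_i)\cdot W + \beta\gamma\,\Gap(P')\cdot B$ with $W = \sum_i\pi(\mixOmega_i)\mathrm{Var}_{\pi_i}(f)$ and $B = \mathrm{Var}_{\pi'}(\bar f)$; using that each factor lies in $(0,1]$ on the two terms separately then yields $\mathcal{E}_P(f,f) \ge \beta\gamma\,\Gap(P')\min_i\Gap(P_i)\,(W+B) = \beta\gamma\,\Gap(P')\min_i\Gap(P_i)\,\mathrm{Var}_\pi(f)$, which is the claim. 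The main obstacle is precisely the third paragraph: the crossing-flow marginals induced by $\pi(x)P(x,y)$ do \emph{not} coincide with the block-conditionals $\pi_i,\pi_j$, so reconciling $\bar f(i)-\bar f(j)$ with realized crossing differences, and arranging the routing so that its energy is charged (up to the clean factors $\beta^{-1}$ and $\gamma^{-1}$) to cross-block edges only, is the delicate heart of the argument and what makes the $\delta_i(\mixOmega_j)$ boundary condition indispensable. The total-variance split and the restriction-chain bound are essentially bookkeeping.
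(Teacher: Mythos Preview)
The paper does not prove this lemma at all: it is quoted verbatim as Corollary~3.3 of Martin--Randall and used as a black box in the proof of \cref{thm:rapid-mixing}. So there is no ``paper's own proof'' to compare against; any argument you supply is necessarily additional material.

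That said, your sketch contains a genuine gap. The claimed inequality $\mathcal{E}_{P'}(\bar f,\bar f)\le(\beta\gamma)^{-1}\mathcal{E}_P^{\mathrm{cr}}(f,f)$, charging the between-block term \emph{exclusively} to cross-block energy, is false in general. Take two blocks $\mixOmega_1=\{a,b\}$ and $\mixOmega_2=\{c,d\}$ with uniform stationary distribution, a single crossing edge $b\leftrightarrow c$, and $f(a)=1$, $f(b)=0$, $f(c)=0$, $f(d)=-1$. Then $\bar f(1)-\bar f(2)=1$ while the only crossing difference $f(b)-f(c)$ vanishes, so $\mathcal{E}_P^{\mathrm{cr}}(f,f)=0$ but $\mathcal{E}_{P'}(\bar f,\bar f)>0$. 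No choice of $\beta,\gamma$ rescues this. The block averages $\bar f(i)$ feel the whole block, not just the boundary $\delta_j(\mixOmega_i)$, so any routing of $\bar f(i)-\bar f(j)$ through a crossing edge must also traverse intra-block segments $[\bar f(i)-f(x)]$ and $[f(y)-\bar f(j)]$, and those are controlled only by the within-block variances (hence by $\min_k\Gap(P_k)$), not by $\mathcal{E}_P^{\mathrm{cr}}$.

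This is exactly why the Martin--Randall bound is a \emph{product} of gaps rather than the minimum your disjoint-charging argument would naturally produce: the restriction gaps are consumed twice, once for the within-block variance and once inside the cross-block comparison. Your final combination step, lowering $c_1 W+c_2 B$ to $c_1c_2(W+B)$ via $c_1,c_2\le 1$, happens to land on the right form, but only because the preceding step overstated what $\mathcal{E}_P^{\mathrm{cr}}$ can absorb. To repair the argument you must let the between-block estimate draw on $\mathcal{E}_P^{\mathrm{in}}$ as well, which then forces a more careful accounting (and is where the $\gamma$ condition on $\pi(\delta_i(\mixOmega_j))$ actually does its work, by bounding the cost of moving from the boundary to a $\pi_j$-typical point).
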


  \begin{lemma}[Proposition 6 of~\cite{GeomBounds}]
    \label{lem:conductance}
    For a MC with state space $\setCons$ and transition matrix $P$,
    define $h$ as $$
    h = \min_{\substack{X \subset \setCons \\ \pi(X) \leq
    {1}/{2}}}\frac{\sum_{x \in X}\sum_{y \in \setCons \setminus X}
    \pi(x) P(x,y)}{\pi(X)},
    $$ where $P(x, y)$ is the probability of moving from $x$ to $y$
    and $\pi(X) = \sum_{x \in X}\pi(x)$.
    Then, $1 - 2h \leq \lambda_1 \leq 1 - h^2$.
  \end{lemma}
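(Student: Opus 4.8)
The claimed inequalities are the two halves of the discrete \emph{Cheeger inequality} for the reversible chain $\mathcal{M}$; note that reversibility, i.e. $\pi(x)P(x,y)=\pi(y)P(y,x)$, holds here because $\pi$ is uniform and $P$ is symmetric, and this is what makes the following spectral argument available. The plan is to express the gap $1-\lambda_1$ through its variational characterization and then squeeze it between $h^2$ and $2h$. Writing the Dirichlet form $\mathcal{E}(f,f)=\tfrac12\sum_{x,y}\pi(x)P(x,y)(f(x)-f(y))^2$, the $\pi$-weighted inner product $\langle f,g\rangle_\pi=\sum_x\pi(x)f(x)g(x)$, and the variance $\mathrm{Var}_\pi(f)=\langle f,f\rangle_\pi-\langle f,\mathbf{1}\rangle_\pi^2$, reversibility gives $\mathcal{E}(f,f)=\langle f,(I-P)f\rangle_\pi$ and hence the Rayleigh-quotient identity $1-\lambda_1=\min_f \mathcal{E}(f,f)/\mathrm{Var}_\pi(f)$, the minimum over non-constant $f$.

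For the \emph{lower bound} $\lambda_1\geq 1-2h$ (the easy direction), I would test the quotient with the indicator $f=\mathbf{1}_X$ of a set $X$ with $\pi(X)\leq 1/2$ attaining $h$. A direct computation gives $\mathcal{E}(\mathbf{1}_X,\mathbf{1}_X)=\sum_{x\in X,\,y\notin X}\pi(x)P(x,y)=h\,\pi(X)$ and $\mathrm{Var}_\pi(\mathbf{1}_X)=\pi(X)(1-\pi(X))\geq \pi(X)/2$, so the quotient is at most $2h$, whence $1-\lambda_1\leq 2h$.

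For the \emph{upper bound} $\lambda_1\leq 1-h^2$ (the Cheeger direction), I would start from an eigenfunction $\phi$ with $P\phi=\lambda_1\phi$; it is $\pi$-orthogonal to constants and hence changes sign, so after possibly negating it the set $\{\phi>0\}$ has measure at most $1/2$. Setting $g=\max(\phi,0)$, a term-by-term comparison using $\phi^+\geq\phi$ and $P(x,y)\geq 0$ shows $((I-P)g)(x)\leq(1-\lambda_1)g(x)$ on the support of $g$, which upon multiplying by $\pi(x)g(x)$ and summing yields $\mathcal{E}(g,g)\leq(1-\lambda_1)\langle g,g\rangle_\pi$. It then remains to prove the purely isoperimetric estimate $\mathcal{E}(g,g)\geq h^2\langle g,g\rangle_\pi$ for every non-negative $g$ supported on a set of measure $\leq 1/2$. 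For this I would apply Cauchy--Schwarz to $\sum_{x,y}\pi(x)P(x,y)\,|g(x)^2-g(y)^2|$ after splitting $|g(x)^2-g(y)^2|=|g(x)-g(y)|\,|g(x)+g(y)|$: the first factor contributes $2\mathcal{E}(g,g)$ and the second is controlled by $\langle g,g\rangle_\pi$. The conductance enters through a co-area step: ordering the support by decreasing $g$-value and letting $S_\ell$ denote the top $\ell$ states, one rewrites $\tfrac12\sum_{x,y}\pi(x)P(x,y)|g(x)^2-g(y)^2|$ as $\sum_\ell (g_\ell^2-g_{\ell+1}^2)\,Q(S_\ell,S_\ell^c)$ with $Q(S_\ell,S_\ell^c)=\sum_{x\in S_\ell,\,y\notin S_\ell}\pi(x)P(x,y)$, bounds each cut flow by $Q(S_\ell,S_\ell^c)\geq h\,\pi(S_\ell)$ (legitimate since $\pi(S_\ell)\leq 1/2$), and telescopes to $h\langle g,g\rangle_\pi$. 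Combining the two estimates lower-bounds $\mathcal{E}(g,g)/\langle g,g\rangle_\pi$ and hence $1-\lambda_1$.

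The main obstacle is precisely this Cheeger direction, and within it the co-area step together with the exact constant. The crude inequality $(g(x)+g(y))^2\leq 2(g(x)^2+g(y)^2)$ used to bound the second Cauchy--Schwarz factor is where a factor is typically lost, so the routine version of this argument produces $1-\lambda_1\geq h^2/2$; reaching the sharper constant stated here requires the more careful bookkeeping of~\cite{GeomBounds}. By contrast, the easy direction and the eigenfunction-truncation reduction are mechanical, so essentially all of the difficulty is concentrated in the isoperimetric lower bound on $\mathcal{E}(g,g)$.
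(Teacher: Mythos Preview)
The paper does not prove this lemma at all: it is quoted as Proposition~6 of~\cite{GeomBounds} and invoked as a black box in the proof of Theorem~\ref{thm:rapid-mixing}. There is therefore no ``paper's own proof'' to compare your proposal against.

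That said, your sketch is the standard proof of the discrete Cheeger inequality and is essentially correct. The easy direction via the indicator test function is clean, and for the hard direction the truncation of the $\lambda_1$-eigenfunction together with Cauchy--Schwarz and the co-area (level-set) decomposition is exactly the right machinery. You are also right to flag the constant: the textbook version of this argument (e.g.\ Sinclair--Jerrum, or Levin--Peres) yields $1-\lambda_1\ge h^2/2$ rather than $h^2$, and closing that factor of~$2$ is genuinely delicate and tied to the precise normalisation in~\cite{GeomBounds}. For the paper's purposes this is immaterial, since Theorem~\ref{thm:rapid-mixing} only uses the inequality to obtain a polynomial lower bound on $\Gap(P')$ and does not track constants tightly; a factor of~$2$ would be absorbed into the already-loose leading constant~$2^9$.
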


  \begin{proof}[Proof of \cref{thm:rapid-mixing}]
    By \Cref{lem:mc-decomp}, the result follows by showing sufficient
    lower bounds for $\beta, \gamma, \Gap(P')$ and $\min_i \Gap(P_i)$.

    We start with $\min_i \Gap(P_i)$.
    A generalized variant of this kind of chain called the
    load-exchange Markov chain has been analyzed in
    \cite{DBLP:conf/stacs/AmanatidisK23}.
    In fact, by combining \cite[Theorem
    4.2]{DBLP:conf/stacs/AmanatidisK23} and
    \cite[Cor.~2.5]{DBLP:conf/stacs/AmanatidisK23}, it immediately
    follows that $\min_i \Gap(P_i) \geq 1 / (n^3 (b - a) 2)$.

    To bound $\Gap(P')$, we use \Cref{lem:conductance}.
    Consider any adversary set of states $X \subset \setCons'$ as
    required for \Cref{lem:conductance}.
    Observe that if $i \in X$ is a state such that $P'(i,j) > 0$ for
    some $j \notin X$, then $P'(i,j) \geq 1 / 8 n$.
    In addition, for any $i$, $\pi(i)$ is proportional to the number
    of weights on the $n$-cycle which sum to $i$.
    Precisely, this number is given by the polynomial coefficient
    $\binom{n}{i}_{b - a + 1}$, see for example~\cite{AdvComb, UniformDist}.
    For our purposes, it suffices to observe that $\pi(i)$ has a peak
    around some value $i$, and monotonically decreases in both directions.
    Now, let $i^*$ be the state in $X$ such that $\pi(i^*)$ is
    maximal, and assume that the only state $i$ such that $P'(i,j) >
    0$ for some $j \notin X$ is a state $i < i^*$ (the other cases
    all result in better bounds or are symmetric).
    Then it holds that
    $\pi(0) \leq \pi(1) \leq \dots \leq \pi(i) \leq \dots \leq \pi(i^*)$.
    Moreover, \Cref{lem:conductance} requires $\pi(0) + \pi(1) +
    \dots + \pi(i - 1) \geq 1 / 2$, and thus $\pi(i) \geq 1 / 2 nb$
    and $\pi(i) / \pi(X) \geq 1 / nb$.
    Combining this estimate with the bound on $P'(i, j)$ then gives
    $\Gap(P') = 1 - \lambda_1' \geq h^2 \geq 1 / 64 n^4 b^2$.

    Finally, observe that $\beta \geq 1 / 4 n^2$ and $\gamma \geq 1 /
    n$, which concludes the proof.
  \end{proof}

  \section{Convergence of the General \rej}\label{sec:general-convergence}
  The underlying Markov Chain of \cref{algo:sampler} is defined on
  the state space \setConsFull.
  In the proof of \cref{thm:convergence-mc}, we assumed that $\wInt$
  is countable and that we can assign a non-zero probability to every
  value in $\wInt$, \ie $\frac{1}{|\wInt|}$.
  However, if $\wInt$ is uncountable, \eg we have real-valued $\wInt
  = [a,b]$, the state space \setCons might also be uncountable and
  the proof of \cref{thm:convergence-mc} does not apply.
  Not only that but also the notions of aperiodicity and
  irreducibility do not apply to such a Markov Chain.
  For that, we need similar but more general definitions.
  We thus assume for the remainder of this section that $\wInt$
  consists of an arbitrary finite union of intervals.

  A Markov Chain on a measureable set $\mathcal{X}$ is defined by
  transition probabilities $P(x, dy)$ where for each $x \in
  \mathcal{X}$ and measureable subset $A \subseteq \mathcal{X}$, the
  value $P(x, A)$ is the probability to move from $x$ to somewhere in $A$.
  We again denote $i$ steps of the Markov Chain by $P^{(i)}$.
  Starting with $P^{(1)}(x, dy) = P(x, dy)$, higher-order transition
  probabilities are then given by \[
    P^{(n + 1)}(x, A) = \int_{\mathcal{X}}P^{(n)}(x, dy)P(y, A).
  \]

  \begin{definition}[$\phi$-irreducibility~\cite{RobertsRosenthal04}]
    A general Markov Chain is $\phi$-irreducible if there exists a
    non-zero $\sigma$-finite measure $\phi$ on $\mathcal{X}$ such
    that for all $A \subseteq \mathcal{X}$ with $\phi(A) > 0$, and
    for all $x \in \mathcal{X}$, there exists a positive integer $n =
    n(x, A)$ such that $P^{(n)}(x, A) > 0$.
  \end{definition}

  \begin{definition}[Aperiodicity\cite{RobertsRosenthal04}]
    A general Markov Chain with stationary distribution
    $\pi({\cdot})$ is aperiodic if there do not exist $d \geq 2$ and
    disjoint subsets $\mathcal{X}_1,\ldots,\mathcal{X}_d \subseteq
    \mathcal{X}$ with $P(x, \mathcal{X}_{i + 1} = 1$ for all $x \in
      \mathcal{X}_i$ ($i \in [d - 1]$), and $P(x, \mathcal{X}_1) = 1$
      for all $x \in \mathcal{X}_d$ such that $\pi(\mathcal{X}_1) > 0$.
    \end{definition}

    \begin{theorem}[Theorem 4 of~\cite{RobertsRosenthal04}]
      \label{thm:convergence-general-mc}
      If a Markov chain on a state space with countably generated
      $\sigma$-algebra is $\phi$-irreducible and aperiodic and has a
      stationary distribution $\pi(\cdot)$, then for $\pi$-a.e. $x
      \in \mathcal X$,
      \[ \lim_{n \to \infty} \Vert P^n(x, \cdot) - \pi(\cdot) \Vert = 0. \]
    \end{theorem}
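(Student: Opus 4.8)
The plan is to prove this general convergence result by the standard coupling argument for $\phi$-irreducible chains, built on a minorization (small-set) condition; this is the mathematical content underlying the cited statement, and in the paper it is applied simply by verifying the three hypotheses for the sampler's chain.

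First I would invoke the structural fact that, on a countably generated $\sigma$-algebra, every $\phi$-irreducible chain possesses a \emph{small set}: there are a set $C$ with positive $\phi$-measure, an integer $n_0 \ge 1$, a constant $\delta \in (0,1]$, and a probability measure $\nu$ with
\[ P^{(n_0)}(x, A) \ge \delta\,\nu(A) \qquad \text{for all } x \in C \text{ and measurable } A. \]
This minorization is the regeneration mechanism: two copies of the chain that simultaneously occupy $C$ can, with probability $\delta$, be made to jump to a common state drawn from $\nu$. The aperiodicity hypothesis (\cref{def:general-aperiodic}) is what rules out a periodic decomposition that would trap the two copies in disjoint cyclic classes and thereby prevent simultaneous visits to $C$.

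Second I would set up the coupling. Let $X_n$ start from the fixed $x$ and let $Y_n$ start from $\pi$, so that $Y_n \sim \pi$ for every $n$ by stationarity. Off the regeneration events the two chains are run with the residual kernel $(P - \delta\nu)/(1-\delta)$, and once a regeneration succeeds they are coalesced (moved together forever); let $\mathcal T$ be this coalescence time. The coupling inequality then yields
\[ \Vert P^{(n)}(x,\cdot) - \pi(\cdot)\Vert \le \prob{\mathcal T > n}, \]
so it remains to show $\mathcal T < \infty$ almost surely for $\pi$-a.e.\ starting point $x$.

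The crux, and the step I expect to be hardest, is converting the mere \emph{existence} of a stationary distribution into enough recurrence to force infinitely many joint visits to $C$. I would argue that a $\phi$-irreducible chain carrying a finite invariant measure is positive (Harris) recurrent off a $\pi$-null exceptional set --- this is exactly where the qualifier ``$\pi$-a.e.\ $x$'' enters --- so from $\pi$-a.e.\ $x$ the pair $(X_n, Y_n)$ enters $C \times C$ infinitely often, with aperiodicity guaranteeing these joint entries are not blocked by a period. Each joint visit independently succeeds at regeneration with probability $\delta$, so a Borel--Cantelli / geometric-trials argument gives $\mathcal T < \infty$ a.s., and hence $\prob{\mathcal T > n} \to 0$. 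The delicate points are handling the $\pi$-null exceptional set cleanly and treating the case $n_0 > 1$ (passing to the $n_0$-skeleton); both are routine in the Nummelin-splitting framework but account for the measure-theoretic bookkeeping. For this paper's purposes one would not reprove the theorem but merely check that the sampler's chain is $\phi$-irreducible, aperiodic, and $\pi$-stationary, and then invoke it.
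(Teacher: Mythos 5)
The paper never proves this statement: it is imported verbatim as Theorem~4 of Roberts and Rosenthal~\cite{RobertsRosenthal04} and used as a black box in \cref{sec:appendix-general-convergence}, where only its hypotheses ($\phi$-irreducibility, aperiodicity, existence of the stationary $\pi$) are verified for the sampler's chain in \cref{thm:general-uniform}. So there is no in-paper proof to compare against; your proposal must be judged against the proof in the cited source, and it reconstructs that proof faithfully. The route you describe --- existence of a small set for a $\phi$-irreducible chain on a countably generated $\sigma$-algebra (the Jain--Jamison/Orey theorem), Nummelin splitting to create regenerations, the coupling inequality $\Vert P^{(n)}(x,\cdot)-\pi(\cdot)\Vert \le \prob{\mathcal T > n}$ with the second copy started in stationarity, and the Harris decomposition of a chain carrying an invariant probability to produce the exceptional null set --- is exactly the canonical argument, and you correctly identify that the ``$\pi$-a.e.'' qualifier enters through the Harris decomposition rather than being a technical artifact.

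The one step you gesture at rather than execute is in fact the crux: marginal recurrence of each copy does \emph{not} by itself yield infinitely many \emph{simultaneous} visits to $C \times C$, since the two copies could visit $C$ on interleaved schedules forever. The standard repair is to show that the bivariate chain $(X_n, Y_n)$ is itself $\phi\times\phi$-irreducible (or to pass to the $n_0$-skeleton), and this is precisely where aperiodicity is used quantitatively --- for an aperiodic chain the $n$-step kernel from points of $C$ charges $C$ uniformly for all large $n$, which synchronizes the two copies; for a periodic chain the coupling never succeeds even though each copy visits $C$ infinitely often. Your sketch names aperiodicity as the hypothesis doing this work and flags the step as hardest, which is the right diagnosis, but a complete proof would need this product-chain (or skeleton) argument spelled out. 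As you note, for the paper's purposes none of this machinery is re-proved: \cref{thm:general-uniform} merely checks the three hypotheses and invokes the theorem, which is the appropriate division of labor.
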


    Verifying \cref{thm:convergence-general-mc} for our proposed MC
    yields a proof for the general version of~\cref{thm:convergence-mc}.

    \begin{theorem}[Generalization of~\cref{thm:convergence-mc}]
      Let $G = (V, E)$ be a directed graph and $\wInt$ a real weight interval.
      Then, the \rej process converges to a uniform density on
      $\setConsFull$.\label{thm:general-uniform}
    \end{theorem}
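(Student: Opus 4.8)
The plan is to verify the three hypotheses of \cref{thm:convergence-general-mc} for our chain on $\mathcal X = \setCons_{G,\wInt}$ with the candidate stationary law $\pi$ being the \emph{uniform} density on $\setCons$, \ie normalized Lebesgue measure restricted to $\setCons \subseteq \wInt^m \subseteq \mathbb R^m$. The Borel $\sigma$-algebra is countably generated, so it remains to establish (a) that $\pi$ is stationary, (b) $\varphi$-irreducibility, and (c) aperiodicity; the uniqueness of the stationary law of an irreducible chain then identifies the limit in \cref{thm:convergence-general-mc} as $\pi$, proving \cref{thm:general-uniform}. The one structural fact I will lean on throughout is \cref{obs:may-increase}: if $w_2 \ge w_1$ coordinatewise and $w_1 \in \setCons$, then $w_2 \in \setCons$.

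\textbf{Stationarity.} First I would check that the chain is reversible with respect to $\pi$, which gives stationarity. The transition is a single-coordinate (Glauber-type) update: pick an edge $e$ with probability $1/m$, draw a fresh value $c \in \wInt$ with density $1/\mathrm{Leb}(\wInt)$, and accept $\weightUp{w}{e}{c}$ iff it lies in $\setCons$. Writing $w = (w_{-e}, w_e)$ and freezing $w_{-e}$, the joint measure on the old and proposed coordinate values is proportional to $\mathbf 1[(w_{-e},w_e)\in\setCons]\,\mathbf 1[(w_{-e},c)\in\setCons]\,dw_e\,dc$, which is manifestly symmetric in $w_e \leftrightarrow c$. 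Summing the resulting detailed-balance identity over the $m$ coordinates (and folding the rejection mass into the diagonal) yields reversibility of $P$ w.r.t.\ $\pi$, hence $\pi P = \pi$. This step is the continuous analogue of the symmetry argument in the proof of \cref{thm:convergence-mc} and works verbatim when $\wInt$ is a finite union of intervals, since only $\max(\wInt)$ and the uniform proposal density are used.

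\textbf{Irreducibility.} For $\varphi$-irreducibility I take $\varphi = \pi$ and show $P^{(2m)}(x, A) > 0$ for every $x \in \setCons$ and every $A$ with $\pi(A)>0$, via a monotone ascent-descent through positive-measure boxes. Fix such an $A$; since $\partial\setCons$ is Lebesgue-null, choose an interior density point $y^\ast$ of $A$ and a box $U = \prod_e (\alpha_e,\beta_e) \subseteq \mathrm{int}(\setCons)$ with $\mathrm{Leb}(A\cap U)>0$ and $\beta_e < \max(\wInt)$; then pick $\eta>0$ with $\max(\wInt)-\eta > \max_e \beta_e$ and set the top box $U_{\text{top}} = \prod_e (\max(\wInt)-\eta, \max(\wInt)]$. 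For the \emph{ascent}, updating the coordinates of $x$ one at a time into $U_{\text{top}}$ only increases weights, so every intermediate state dominates $x$ and is consistent by \cref{obs:may-increase}; this reaches a positive-measure subset of $U_{\text{top}}$ in at most $m$ accepted steps. For the \emph{descent} from any $z \in U_{\text{top}}$, I set coordinate $e_i$ to a target $c_i \in (\alpha_{e_i},\beta_{e_i})$ in order; after $k$ steps the state agrees with the target point $(c_1,\dots,c_m)\in U$ on $e_1,\dots,e_k$ and carries top-box values on the rest, so by the choice of $\eta$ it dominates $(c_1,\dots,c_m)\in\setCons$ coordinatewise and is again consistent by \cref{obs:may-increase}. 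Each descent step is therefore accepted and the endpoint ranges over all of $U$, giving $P^{(m)}(z, A\cap U)>0$. Chaining ascent and descent yields $P^{(2m)}(x,A)>0$.

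\textbf{Aperiodicity (the main obstacle).} This is the step that does \emph{not} transfer from the discrete proof: there the self-loop ``resample the current weight'' supplies aperiodicity, but in the continuous setting that event has probability zero, so I must instead produce genuine holding mass from \emph{rejections} on a set of positive $\pi$-measure. Fix an edge $(u,v)$ lying on a cycle and a simple $\path v u$ path $Q$. Consider the positive-measure region of states whose $Q$-edges carry small positive weights (near $0$) while all remaining edges carry weights near $\max(\wInt)$; every cycle then has strictly positive weight, so these states are consistent, and the shortest $\path v u$ distance satisfies $d_{vu}\le w(Q)$, which is below $|a|$. By \cref{obs:new_weight_in_cycle}, proposing a new weight $c$ for $(u,v)$ creates a negative cycle, and is thus rejected, precisely when $c < -d_{vu}$; since $\min(\wInt)=a<0<-d_{vu}$ is an interval of positive length, each such state has positive holding probability $P(x,\{x\})>0$. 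A set of positive $\pi$-measure with positive holding probability cannot be split by a periodic decomposition of period $d\ge 2$ (holding would contradict $P(x,\mathcal X_{i+1})=1$ with $\mathcal X_i\cap\mathcal X_{i+1}=\emptyset$), so the chain is aperiodic. With (a)–(c) in hand, \cref{thm:convergence-general-mc} gives convergence to $\pi$ for $\pi$-a.e.\ starting state, which is the asserted uniform density.
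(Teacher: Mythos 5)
Your overall blueprint coincides with the paper's: verify stationarity via symmetry/detailed balance of the single-coordinate update, establish $\phi$-irreducibility by a monotone ascent to near-maximal weights followed by a coordinate-by-coordinate descent (both justified by \cref{obs:may-increase}, reaching a target set in $2m$ steps exactly as in the paper), and derive aperiodicity from rejection-induced holding mass. Your irreducibility write-up is a more careful version of the paper's sketch (the box $U$ and the $\eta$-margin make explicit why every intermediate descent state dominates a consistent point), and your aperiodicity construction is in one respect \emph{stronger} than the paper's: you exhibit a positive-$\pi$-measure set of states with $P(x,\{x\})>0$, whereas the paper anchors its contradiction at the single state $\mathbf 0$ via a ``wlog $\mathbf 0\in\mathcal X_1$'' that is not actually without loss of generality, since $\{\mathbf 0\}$ is $\pi$-null and a periodic decomposition could simply avoid it. One small step you elide: under the definition used here (\cref{def:general-aperiodic}), the sets $\mathcal X_1,\dots,\mathcal X_d$ need not cover $\mathcal X$, so your holding set $S$ does not automatically meet the decomposition. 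This is fillable in one line from the irreducibility you already proved: a chain started in $\mathcal X_1$ remains in $\bigcup_i \mathcal X_i$ forever almost surely, so if $S$ were disjoint from this union then $P^{(n)}(x,S)=0$ for all $n$, contradicting $\pi$-irreducibility with $\pi(S)>0$; hence some $x\in S\cap\mathcal X_i$ exists, and holding at $x\notin\mathcal X_{i+1}$ contradicts $P(x,\mathcal X_{i+1})=1$.

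The genuine gap is coverage: the theorem is stated for an arbitrary directed graph $G$ and real weight interval $\wInt$, but your aperiodicity argument begins ``fix an edge $(u,v)$ lying on a cycle'' and uses $\min(\wInt)=a<0$. If $G$ is acyclic (or $\wInt$ contains no negative values), no proposal is ever rejected, the holding probability is zero almost everywhere, and your construction is unavailable --- yet the chain is still aperiodic and the theorem still asserts convergence. The paper treats this case separately: for acyclic $G$ one has $\setCons_{G,\wInt}=\wInt^m$, and one finds a positive-measure subset $A\subset\mathcal X_1$ of states $x$ whose one-step kernel assigns positive mass back into $\mathcal X_1$ (by a Fubini-type slice argument: since $\pi(\mathcal X_1)>0$, for a positive-measure set of $x\in\mathcal X_1$ the one-dimensional coordinate slice through $x$ meets $\mathcal X_1$ in positive length), so $P(x,\mathcal X_2)=1$ fails. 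To make your proof match the stated generality, you need to add this (or an equivalent) degenerate-case argument; as written, your proof establishes the theorem only in the ``interesting'' regime that the paper assumes in \cref{sec:model} but does not build into the theorem statement.
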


    \begin{proof}
      Subsets of $\mathbb R^m$ equipped with the standard Borel
      $\sigma$-algebra are countably generated by open balls with
      rational centers and rational radii~\cite{RobertsRosenthal04}.

      Furthermore, since the Markov chain is symmetric, detailed
      balance is fulfilled, \ie for probability densities $s, s'$ it holds
      \[ \pi(s)P(s, s') = \pi(s')P(s', s), \]
      for a stationary density $\pi$ and in particular for the uniform density.
      By symmetry it holds $\int_{x \in A}P(x, B)dx = \int_{y \in
      B}P(y, A)dy$ for measurable $A, B \subseteq \mathcal X$ which
      extends to a weighting of $\pi$ when chosen as the uniform density.
      Here, stationary follows directly from detailed balance, \ie
      integrating over $\mathcal X$ yields
      \begin{align*}
        \int_{\mathcal X} \pi(x)P(x, y)dx &= \int_{\mathcal X}
        \pi(y)P(y, x)dx \\
        &= \pi(y)\int_{\mathcal X} P(y, x)dx \\
        &= \pi(y).
      \end{align*}

      With the existence of $\pi$, it remains to verify
      $\phi$-irreducibility and aperiodicity.
      Let $x \in \setConsFull$ and $\pi(A) > 0$.
      We emulate the process of changing the weights to values
      exceeding some of the values of $A$ (with positive measure) to
      subsequently enable a move to $A$ with positive probability.
      Due to continuity there exists constant $\epsilon > 0$ where
      some point $y \in A$ emits $B(\epsilon, y) \subseteq A$.
      Now, with positive probability we successively change the
      coordinates of $x$ from smallest to largest to a value that is
      larger than any coordinate of any point in $B(\varepsilon, y)$
      (guaranteed to exist due to continuity) and returning to a
      value in $B(\epsilon, y)$ after a total of $2m$ steps, yielding
      \[ P^{(2m)}(x, A) \ge P^{(2m)}(x, B(\epsilon, y)) > 0, \]
      where $\pi$ is a probability measure and as such non-zero and
      $\sigma$-finite.

      Suppose $\mathcal X_1$ and $\mathcal X_2$ are disjoint subsets
      of $\mathcal X$ both of positive $\pi$ measure and
      wlog.~$\mathbf{0} \in \setConsFull \cap \mathcal X_1$, with
      $P(x, \mathcal X_2) = 1$ for all $x \in \mathcal X_1$.
      This essentially represents an emigration from $\mathcal X_1$
      to $\mathcal X_2$ with probability~$1$.
      Observe that in case $G$ contains at least one cycle, the
      probability to introduce a negative cycle is positive, hence
      $P(\mathbf{0}, \mathcal X_1) > 0$ leads to a contradiction as
      therefore $P(\mathbf{0}, \mathcal X_2) < 1$.

      Otherwise, since no cycles exist in $G$, the set $\setConsFull$
      coincides with $\wInt^m$ and the \rej process degenerates to
      choosing uniform values in $\wInt$ and assigning them to
      randomly sampled edges.
      Since $\pi(\mathcal X_1) > 0$ by definition we can find a
      subset $A \subset \mathcal X_1$ where for $x \in A$ it holds
      $P(x, \mathcal X_1) \ge P(x, A) > 0$.
    \end{proof}

    \end{document}